\def \bE {{\mathbb{E}}}
\def \bF {{\mathbb{F}}}
\def \bN {{\mathbb{N}}}
\def \bR {{\mathbb{R}}}
\def \cA {{\mathcal{A}}}
\def \cF {{\mathcal{F}}}
\def \cG {{\mathcal{G}}}
\def \cQ {{\mathcal{Q}}}
\def \cV {{\mathcal{V}}}
\def \cZ {{\mathcal{Z}}}
\def \bP {{\textbf{P}}}
\def \bQ {{\textbf{Q}}}
\def \d {\text{d}}
\def \CPS {{\text{CPS}}}
\def \loc {{\text{loc}}}
\def \SQ {{\tilde{S}^\bQ}}
\def \SQQ {{\tilde{S}^{\bQ_0}}}
\def \hSQ {{\widehat{S}^{\widehat{\bQ}}}}
\def \tSn {{\tilde{S}^n}}
\def \bSQ {{\bar{S}^{\bar{\bQ}}}}
\def \bbQ {{\overline{\bQ}}}
\def \bQQ {{\bQ_0}}
\renewcommand{\emptyset}{\font\cmsy = cmsy10 at 10pt
 \hbox{\cmsy \char 59}
}
\numberwithin{equation}{section}
\theoremstyle{plain}                    
\newtheorem{theorem}{Theorem}[section]
\newtheorem{lemma}[theorem]{Lemma}
\newtheorem{proposition}[theorem]{Proposition}
\newtheorem{remark}[theorem]{Remark}
\theoremstyle{definition}
\newtheorem{definition}[theorem]{Definition}
\newtheorem{example}[theorem]{Example}
\newtheorem{assumption}[theorem]{Assumption}
\theoremstyle{remark}
\newcommand*\samethanks[1][\value{footnote}]{\footnotemark[#1]}
\renewcommand\labelenumi{(\roman{enumi})}    
\renewcommand\theenumi\labelenumi                                   
\DeclareMathSymbol{\varnothing}{\mathord}{AMSb}{"3F}
\DeclareMathOperator*{\esssup}{ess\,sup}
\DeclareMathOperator*{\essinf}{ess\,inf}
\title{Asset Price Bubbles in markets with Transaction Costs}
\author{F. Biagini, T. Reitsam}
\date{\today}
\def\@tocline#1#2#3#4#5#6#7{\relax
  \ifnum #1>\c@tocdepth 
  \else
    \par \addpenalty\@secpenalty\addvspace{#2}%
    \begingroup \hyphenpenalty\@M
    \@ifempty{#4}{%
      \@tempdima\csname r@tocindent\number#1\endcsname\relax
    }{%
      \@tempdima#4\relax
    }%
    \parindent\z@ \leftskip#3\relax \advance\leftskip\@tempdima\relax
    \rightskip\@pnumwidth plus4em \parfillskip-\@pnumwidth
    #5\leavevmode\hskip-\@tempdima
      \ifcase #1
       \or\or \hskip 1em \or \hskip 2em \else \hskip 3em \fi%
      #6\nobreak\relax
    \dotfill\hbox to\@pnumwidth{\@tocpagenum{#7}}\par
    \nobreak
    \endgroup
  \fi}
\newcommand*{\rom}[1]{\expandafter\@slowromancap\romannumeral #1@}
\author{Francesca Biagini\thanks{Workgroup Financial and Insurance Mathematics, Department of Mathematics, Ludwig-Maximilians Universit{\"a}t, Theresienstraße 39, 80333 Munich, Germany. Emails: biagini@math.lmu.de, reitsam@math.lmu.de.} \and Thomas Reitsam\samethanks[1] \thanks{The financial support of the Verein zur Versicherungswissenschaft M\"unchen e.V. is gratefully acknowledged} }
 \title {A dynamic version of the super-replication theorem under proportional transaction costs}
\begin{document}

\maketitle
\begin{abstract}
We extend the super-replication theorems of \cite{schachermayersuperreplication} in a dynamic setting, both in the num\'eraire-based as well as in the num\'eraire-free setting. For this purpose, we generalize the notion of admissible strategies. In particular, we obtain a well-defined super-replication price process, which is right-continuous under some regularity assumptions.
\end{abstract}\noindent
\textbf{Keywords:}\begin{small} super-replication, proportional transaction costs, consistent price systems\end{small}
\smallbreak\noindent
\textbf{Mathematics Subject Classification (2020):} 91G15, 91G20, 60G07
\smallbreak\noindent
\textbf{JEL Classification:} C65, G13

\section{Introduction}
\label{sec:introduction}

In this paper, we provide a dynamic approach for super-replication dualities in market models with proportional transaction costs and finite time horizon $T$. \\
Since \cite{quenez}, the concept of super-replication has been thoroughly studied in a variety of market models. For the frictionless case, we refer to \cite{bartl2019duality}, \cite{bartl2020pathwise}, \cite{burzoni2017model}, \cite{carassus2007class}, \cite{quenez}, \cite{kramkovduality}, \cite{nutz2015robust}, \cite{nutz2012superhedging}, \cite{obloj2021robust}, \cite{touzi1999super}. Among other market models, super-replication dualities were established under proportional transaction costs, see e.g. \cite{campischachermayer}, \cite{cvitanic}, \cite{cvitanic1999closed}, \cite{kabanov99}, \cite{schachermayersuperreplication}, \cite{touzi1999super}. \\
The aim of this paper is to extend the results from Theorem 4.1 of \cite{campischachermayer}, and Theorems 1.4 and 1.5 of \cite{schachermayersuperreplication} to a dynamic version in order to obtain a well-defined super-replication price process in market models with transaction costs. In the frictionless case the super-replication price process is a supermartingale, see \cite{quenez}, \cite{kramkovduality}. In general, this is not the case in presence of transaction costs. In \cite{biagini2019asset}, Proposition 3.4 provides a dynamic version of the super-replication duality in the local case for a special case, where the liquidation value of the contingent claim is bounded from below by a random variable satisfying strong integrability conditions. Here, we formulate the dynamic super-replication dualities in the local and non-local case in a more general setting.\\
For this purpose, we follow the approach of \cite{campischachermayer} and \cite{schachermayersuperreplication}. In \cite{campischachermayer}, the super-replication duality for $t=0$ is proved for the num\'eraire-free scenario. In \cite{schachermayersuperreplication}, the result of \cite{campischachermayer} is formulated in a one-dimensional setting and extended to a num\'eraire-based version. The proofs of Theorem 4.1 of \cite{campischachermayer} and Theorem 1.5 of \cite{schachermayersuperreplication} strongly rely on the bipolar theorem of \cite{kabanovlast}.\\
In our approach, a fundamental role is played by the definition of admissible strategies. More precisely, we consider strategies on $[s,T]$, for $s\in [0,T]$, with random initial endowments depending on the information available at time $s$. The liquidation value of the corresponding portfolio is allowed to be be bounded from below by a random variable rather than by a constant, see Definition \ref{def:strategies} and \cite{biagini2019asset}. In this general setting, we then prove an analogous version of the bipolar theorem of \cite{kabanovlast}, see Theorem \ref{thm:bipolarext}, and the super-replication duality for the num\'eraire-free case, see Theorem \ref{thm:superreplicationdynamic}. Finally, we show that the obtained super-replication price process is well-defined. Further, we provide sufficient conditions such that the super-replication price process is right-continuous. Our analysis is motivated by the study of asset price bubbles in the presence of transaction costs, see \cite{biagini2019asset}.\\
The paper is organized as follows. In Section \ref{sec:setting}, we present the setting, define admissible strategies and provide a dual representation for consistent local price systems. In Section \ref{sec:dynamicsuperreplication}, we provide an extended version of the bipolar theorem, see Theorem \ref{thm:bipolarext}, and prove super-replication results for the dynamic setting in Theorem \ref{thm:superreplicationdynamic} and \ref{thm:superreplicationdynamiclocal}, respectively. Then, in Section \ref{sec:furtherproperties}, we elaborate further properties of the super-replication price process. In particular, we provide sufficient conditions such that the super-replication price process is right-continuous.

\section{Setting}
\label{sec:setting}

Let $T>0$ describe a finite time horizon and $(\Omega,\cF,(\cF_t)_{0\leq t \leq T},\bP)$ be a filtered probability space, where the filtration $\bF:=(\cF_t)_{0\leq t\leq T}$ satisfies the usual conditions of right-continuity and saturatedness with $\cF_0=\{\emptyset,\Omega\}$ and $\cF_T=\cF$. We consider a financial market model consisting of a risk-free asset $B$, normalized to $B\equiv 1$, and a risky asset $S$. Throughout the paper we assume that $S=(S_t)_{0\leq t\leq T}$ is an $\bF$-adapted stochastic process, with c\`adl\`ag and strictly positive paths such that $S_t\in L_+^1(\cF_t,\bP)$ for all $t\in [0,T]$. For trading the risky asset in the market model, proportional transaction costs $0<\lambda<1$ are charged, i.e., to buy one share of $S$ at time $t$ the trader has to pay $(1+\lambda)S_t$ and for selling one share of $S$ at time $t$ the trader receives $(1-\lambda)S_t$. The interval $[(1-\lambda)S_t,(1+\lambda)S_t]$ is called \emph{bid-ask-spread}. 
\begin{definition}
\label{def:cps}
For $0\leq s<t\leq T$, we call $\CPS(s,t)$ (resp. $\CPS_{\loc}(s,t)$) the family of pairs $(\bQ,\SQ)$ such that $\bQ$ is a probability measure on $\cF_t$, $\bQ\sim\bP|_{\cF_t}$, $\SQ$ is a martingale (resp. local martingale) under $\bQ$ on $[s,t]$, and
\begin{align}
\label{eq:cps}
(1-\lambda)S_u\leq \SQ_u\leq (1+\lambda)S_u,\quad\text{for }s \leq u\leq t.
\end{align}
A pair $(\bQ,\SQ)$ in $\CPS(s,t)$ (resp. $\CPS_{\loc}(s,t)$) is called \emph{consistent price system} (resp. \emph{consistent local price system}). By $\cQ(s,T)$ (resp. $\cQ_\loc(s,T)$) we denote the set of measures $\bQ$ such that there exists a pair $(\bQ,\SQ)\in\CPS(s,T)$ (resp. $(\bQ,\SQ)\in\CPS_\loc(s,T)$). Further, we write $L^p(\cF_s,\cQ):=\bigcap_{\bQ\in\cQ(s,T)} L^p(\cF_s,\bQ)$ and $L^p(\cF_s,\cQ_\loc):=\bigcap_{\bQ\in\cQ_\loc(s,T)} L^p(\cF_s,\bQ)$. By $L_+^p(\cF_s,\cQ)$ (resp. $L_+^p(\cF_s,\cQ_\loc)$) we denote the space of $[0,\infty)$-valued random variables $X\in L^p(\cF_s,\cQ)$ (resp. $X\in L^p(\cF_s,\cQ_\loc)$).
\end{definition}\noindent
It is well-known, that in the frictionless case the no-arbitrage condition no free lunch with vanishing risk (NFLVR) is equivalent to the existence of an equivalent local martingale measure, see \cite{delbaenschachermayer94}. In particular, the price process must be a semi-martingale and admit an equivalent local martingale measure. In contrast, under proportional transaction costs the existence of consistent (local) price systems guarantee the absence of arbitrage in the sense of Definition 4 of \cite{guasoniftap}. \\
A consistent (local) price system can be thought as a frictionless market with better conditions for traders, see \cite{schachermayercps}. Considering consistent price systems in the non-local or local sense corresponds in the frictionless case to the characterization of no arbitrage using true martingales or local martingales. In both cases the difference lies in the choice of admissible trading strategies. If we fix a num\'eraire, we can control the portfolio in units of the num\'eraire, and we do not allow short positions in the risky asset. Without num\'eraire, we also admit portfolios with short positions in the assets. See Chapter 5 of \cite{guasoniftap} for a more detailed discussion. For the convenience of the reader, we here state the assumptions that we use through out the paper.

\begin{assumption}
\label{assumption}
We assume that $S$ admits a consistent \emph{local} price system for every $0<\lambda'\leq \lambda$.
\end{assumption} \noindent

\begin{assumption}
\label{assumption2}
We assume that $S$ admits a consistent price system for every $0<\lambda'\leq \lambda$. 
\end{assumption}\noindent
We follow the approach of \cite{biagini2019asset} and define admissible trading strategies as follows.

\begin{definition}
Fix $s\in [0,T]$. A \emph{self-financing trading strategy} starting with initial endowment $(X_s^1,X_s^2)\in L_+^0(\cF_s,\bP)\times L_+^0(\cF_s,\bP)$ is a pair of $\bF$-predictable finite variation processes $(\varphi_t^1,\varphi_t^2)_{s\leq t\leq T}$ on $[s,T]$ such that
\begin{enumerate}
\item $\varphi_s^1=X_s^1$ and $\varphi_s^2=X_s^2$,
\item denoting by $\varphi_t^1=\varphi_s^1+\varphi_t^{1,\uparrow}-\varphi_t^{1,\downarrow}$ and $\varphi_t^2=\varphi_t^{2,\uparrow}-\varphi_t^{2,\downarrow}$, the Jordan-Hahn decomposition of $\varphi^1$ and $\varphi^2$ into the difference of two non-decreasing processes, starting at $\varphi_s^{1,\uparrow}=\varphi_s^{1,\downarrow}=\varphi_s^{2,\uparrow}=\varphi_s^{2,\downarrow}=0$, these processes satisfy
\begin{align}
\label{ali:diffform}
\d\varphi_t^1\leq (1-\lambda)S_t\d\varphi_t^{2,\downarrow}-(1+\lambda)S_t\d\varphi_t^{2,\uparrow},\quad s\leq t\leq  T.
\end{align}
\end{enumerate}
\end{definition}

\begin{definition}
\label{def:strategies}
Fix $s\in[0,T]$.
\begin{enumerate}
\item Let $X_s\in L_+^1(\cF_s,\cQ_\loc)$. A self-financing trading strategy $\varphi=(\varphi^1,\varphi^2)$ is called \emph{admissible in a num\'eraire-based} sense on $[s,T]$ starting with initial endowment $\varphi_s=(X_s,0)$ if there is  $M_s\in L_+^1(\cF_s,\cQ_\loc)$ such that the liquidation value $V_\tau^{liq}$ satisfies
\begin{align}
\label{ali:admnumbased}
V_\tau^{liq}(\varphi^1,\varphi^2):=\varphi_\tau^1+\left(\varphi_\tau^2\right)^+(1-\lambda)S_\tau-\left(\varphi_\tau^2\right)^-(1+\lambda)S_\tau\geq -M_s,
\end{align}
for all $[s,T]$-valued stopping times $\tau$.
\item Let $(X_s^1,X_s^2)\in L_+^1(\cF_s,\cQ)\times L_+^\infty(\cF_s,\cQ)$. A self-financing trading strategy $\varphi=(\varphi^1,\varphi^2)$ is called \emph{admissible in a num\'eraire-free} sense on $[s,T]$ starting with initial endowment $\varphi_s=(X_s^1,X_s^2)$ if there is $M_s:=(M_s^1,M_s^2)\in L_+^1(\cF_s,\cQ)\times L_+^\infty(\cF_s,\cQ)$ such that
\begin{align}
\label{ali:admnumfree}
V_\tau^{liq}(\varphi^1,\varphi^2):=\varphi_\tau^1+\left(\varphi_\tau^2\right)^+(1-\lambda)S_\tau-\left(\varphi_\tau^2\right)^-(1+\lambda)S_\tau\geq -M_s^1-M_s^2S_\tau,
\end{align}
for all $[s,T]$-valued stopping times $\tau$.
\end{enumerate}\noindent
We call a strategy satisfying \eqref{ali:admnumbased} or \eqref{ali:admnumfree} $M_s$-admissible in a num\'eraire-based or num\'eraire-free sense, respectively.
\end{definition} \noindent
The num\'eraire-based definition of admissibility corresponds to the setting of consistent local price systems. The portfolio is bounded from below, i.e., can be hedged in units of the num\'eraire. In particular, no short positions in the risky asset are admissible.\\
On the other side, the num\'eraire-free version of admissible trading strategy is used in the context of consistent price systems in the non-local sense. In this case no natural num\'eraire is needed and the portfolio is bounded from below by a position which depends on each of the assets. Thus, also short positions in the risky asset are admissible.

\begin{definition}
\label{def:contingentclaim}
A contingent claim $X_T=(X_T^1,X_T^2)$ is an $\cF_T$-measurable random variable in $L^0(\bR^2;\cF_T,\bP)$ which pays $X_T^1$ units of the bond and $X_T^2$ units of the risky asset at time $T$. 	
\end{definition}\noindent
Note that by Definition \ref{def:contingentclaim} a contingent claim is not assumed to be strictly positive. However, in the sequel we will require some lower bound properties depending on the setting, see \eqref{ali:admnumbased} and \eqref{ali:admnumfree}.
\begin{definition}
For $M_s:=(M_s^1,M_s^2)\in L_+^1(\cF_s,\cQ)\times L_+^\infty(\cF_s,\cQ)$ (resp. $M_s\in L_+^1(\cF_s,\cQ_\loc)$) we denote by $\cA_{s,T}^{M_s}$ (resp. $\cA_{s,T}^{M_s,\loc}$) the set of pairs $(\varphi_T^1,\varphi_T^1)\in L^0(\bR^2;\cF_T,\bP)$ of terminal values of self-financing trading strategies $\varphi$, starting at $\varphi_s=(0,0)$, which are $M_s$-admissible in the num\'eraire-free sense (resp. num\'eraire-based sense). Further, we denote
\begin{align}
\label{ali:defterminalvalues}
\cA_{s,T}:=\left\{\varphi_T:\varphi_T\in\cA_{s,T}^{M_s} \text{ for some } M_s=(M_s^1,M_s^2)\in L_+^1(\cF_s,\cQ)\times L_+^\infty(\cF_s,\cQ)\right\}.
\end{align}
\end{definition}\noindent
We now introduce a dual theory for consistent (local) price systems, see also \cite{campischachermayer}, \cite{guasoniftap}, \cite{kabanovlast}. For fixed $\lambda>0$ we denote by $K_t$ the solvency cone at time $t$, defined as
\begin{align}
\label{ali:solvencycone}
K_t(\omega)=\text{cone}\left\{(1+\lambda)S_t(\omega) e_1-e_2,-e_1+\frac{1}{(1-\lambda)S_t(\omega)} e_2\right\},
\end{align}
where $e_1=(1,0), e_2=(0,1)$ are the unit vectors in $\bR^2$, and by $K_t^*=(-K_t)^\circ$ the corresponding polar cone, given by
\begin{align}
\label{ali:polarcone}
\begin{split}
K_t^*(\omega)=(-K_t)^\circ(\omega)&=\left\{(y_1,y_2)\in\bR_+^2: (1-\lambda)S_t(\omega)\leq \frac{y_2}{y_1}\leq (1+\lambda)S_t(\omega)\right\}\cup\{0\}\\
&=\left\{y\in\bR^2: \langle x,y\rangle\leq 0, \forall x\in\left( -K_t(\omega)\right)\right\}\\
&=\left\{y\in\bR^2:\langle x,y\rangle \geq 0,\ \forall x\in K_t(\omega)\right\}.
\end{split}
\end{align}

\begin{definition}
\label{def:cpspolar}
Let $s\in [0,T]$. We define $\mathcal{Z}(s,T)$ (resp. $\mathcal{Z}_{\loc}(s,T)$) as the set of processes $Z=(Z_t^1,Z_t^2)_{s\leq t\leq T}$ such that $Z^1$ is a $\bP$-martingale and $Z^2$ is a $\bP$-martingale (resp. local $\bP$-martingale) and such that $Z_t\in K_t^*\backslash\{0\}$ a.s. for all $t\in[s,T]$.
\end{definition} \noindent
The following proposition from \cite{guasoniftap} provides a useful representation of consistent (local) price systems by elements in $\mathcal{Z}$ (resp. $\mathcal{Z}_\loc$) and follows directly from the definition of $K_t^*$ in \eqref{ali:polarcone}.

\begin{proposition}[Proposition 3, \cite{guasoniftap}]
\label{prop:pmartingale}
Let $s\in [0,T]$ and $Z=(Z_t^1,Z_t^2)_{s\leq t\leq T}$ be a $2$-dimensional stochastic process with $Z_T^1\in L^1(\cF_T,\bP)$. Define the measure $\bQ(Z)\ll\bP$ by $\d\bQ(Z)/\d\bP:=Z_T^1/\bE[Z_T^1]$. Then $Z\in \cZ(s,T)$ (resp. $Z\in \cZ_{\loc}(s,T)$) if and only if $(\bQ(Z),(Z^2/Z^1))$ is a consistent price system (resp. consistent local price system) on $[s,T]$.
\end{proposition}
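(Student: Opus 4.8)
The plan is to read off both equivalences from the coordinate description of $K_t^*$ in \eqref{ali:polarcone} together with the abstract Bayes rule, so that the proposition becomes essentially a translation of definitions; the only genuine input is the transfer of the (local) martingale property under an equivalent change of measure.

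First I would unpack the cone condition. By \eqref{ali:polarcone} we have $K_t^*(\omega)\subseteq\bR_+^2$, and, away from the origin, $y\in K_t^*(\omega)$ iff $y_1>0$ and $(1-\lambda)S_t(\omega)\le y_2/y_1\le(1+\lambda)S_t(\omega)$. Hence a process $Z$ with $Z_t\in K_t^*\setminus\{0\}$ a.s.\ automatically has $Z_t^1>0$ and $Z_t^2\ge0$ a.s., and, setting $\SQ_t:=Z_t^2/Z_t^1$, the condition $Z_t\in K_t^*\setminus\{0\}$ a.s.\ is equivalent to $(1-\lambda)S_t\le\SQ_t\le(1+\lambda)S_t$ a.s., i.e.\ to \eqref{eq:cps}. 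Conversely, if $(\bQ(Z),Z^2/Z^1)$ is a consistent (local) price system, then $\SQ_t$ lies in the bid--ask spread, hence $0<\SQ_t<\infty$, and since $\d\bQ(Z)/\d\bP=Z_T^1/\bE[Z_T^1]\ge0$ together with the martingale property of $Z^1$ forces $Z^1>0$, we obtain $Z_t=Z_t^1(1,\SQ_t)\in K_t^*\setminus\{0\}$ a.s. So only the martingale clauses remain to be matched.

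For these I would change measure. As $Z^1$ is a strictly positive $\bP$-martingale with $\bE[Z_T^1]\in(0,\infty)$, the measure $\bQ(Z)$ is well defined, $\bQ(Z)\sim\bP|_{\cF_T}$, and its density process on $[s,T]$ is $D_t=\bE_\bP[Z_T^1/\bE[Z_T^1]\mid\cF_t]=Z_t^1/\bE[Z_T^1]$, proportional to $Z^1$. In the non-local case I would apply Bayes' rule, noting that $\SQ_vZ_v^1=Z_v^2$ is $\cF_v$-measurable: for $s\le u\le v\le T$,
\[
\bE_{\bQ(Z)}[\SQ_v\mid\cF_u]=\frac{\bE_\bP[Z_T^1\,\SQ_v\mid\cF_u]}{Z_u^1}=\frac{\bE_\bP[Z_v^2\mid\cF_u]}{Z_u^1},
\]
and, likewise, $\bE_{\bQ(Z)}[|\SQ_v|]=\bE_\bP[|Z_v^2|]/\bE[Z_T^1]$. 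Hence $\SQ_v\in L^1(\bQ(Z))$ iff $Z_v^2\in L^1(\bP)$, while the displayed identity gives $\bE_{\bQ(Z)}[\SQ_v\mid\cF_u]=\SQ_u$ iff $\bE_\bP[Z_v^2\mid\cF_u]=Z_u^2$; thus $\SQ$ is a $\bQ(Z)$-martingale on $[s,T]$ iff $Z^2$ is a $\bP$-martingale on $[s,T]$. In the local case I would invoke the standard fact that, for $\bQ\sim\bP$ with density process $D$, a process $X$ is a $\bQ$-local martingale iff $XD$ is a $\bP$-local martingale; here $XD=(Z^2/Z^1)(Z^1/\bE[Z_T^1])=Z^2/\bE[Z_T^1]$, so $\SQ=Z^2/Z^1$ is a $\bQ(Z)$-local martingale on $[s,T]$ iff $Z^2$ is a $\bP$-local martingale on $[s,T]$.

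Combining the two steps yields both directions of the equivalence in the local and non-local settings: the defining requirements of $Z\in\cZ(s,T)$ (resp.\ $\cZ_\loc(s,T)$) — $Z^1$ a $\bP$-martingale, $Z^2$ a $\bP$-(local) martingale, and the cone condition — translate exactly into $(\bQ(Z),Z^2/Z^1)$ being a consistent (local) price system on $[s,T]$. There is no serious obstacle here; the only points requiring a little care are the strict positivity of $Z^1$ (so that the ratio $Z^2/Z^1$ and the measure $\bQ(Z)$ are well behaved) and the two classical change-of-measure facts recalled above.
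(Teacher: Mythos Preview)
Your proposal is correct and is precisely the argument the paper has in mind: the paper does not spell out a proof but simply remarks that the proposition ``follows directly from the definition of $K_t^*$ in \eqref{ali:polarcone}'' and cites \cite{guasoniftap}; your write-up is exactly that unpacking, reading off the bid--ask constraint from the cone description and transferring the (local) martingale property via the density process and Bayes' rule.
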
 \noindent

\section{Dynamic super-replication}
\label{sec:dynamicsuperreplication}

\subsection{A Bipolar Theorem in a dynamic setting}
\label{sec:bipolar}
In this section we aim to provide a bipolar theorem which will then be used to apply techniques from \cite{campischachermayer} and \cite{schachermayersuperreplication} in the proof of the super-replication theorems. Theorem \ref{thm:bipolarext} extends the bipolar theorem of \cite{kabanovlast}, see also \cite{kabanovbook}.

\begin{definition}
\label{def:directedupwards}
	We define the partial order $\succeq$ on $L^0( \bR^2;\cF_T,\bP)$ by letting $\varphi\succeq \psi$ if and only if $V_T^{liq}(\varphi^1-\psi^1,\varphi^2-\psi^2)\geq 0$, i.e., if the portfolio $\varphi-\psi$ can be liquidated to the zero portfolio.\\
	We say that a set $\Phi\subset L^0(\bR^2;\cF_T,\bP)$ is directed upwards if for $\psi_1,\psi_2\in \Phi$ there exists $\psi\in \Phi$ with $\psi\geq \psi_1\vee \psi_2$.
\end{definition}\noindent
We denote by $L_{1,\infty}^0$ the cone in $L^0( \bR^2;\cF_T,\bP)$ given by the random variables $\xi=(\xi^1,\xi^2)$ such that $(\xi^1,\xi^2)\succeq (-M_s^1,-M_s^2)$ for some $(M_s^1,M_s^2)\in L_+^1(\cF_s,\cQ)\times L_+^\infty(\cF_s,\cQ)$. Further, we denote by $L_b^0\subset L^0( \bR^2;\cF_T,\bP)$ the cone formed by the random variables $\xi$ such that $(\xi^1,\xi^2)\succeq (-M,-M)$ for some $M>0$, following \cite{kabanovbook}.

\begin{remark}
Note that the conditional expectation
\begin{equation}
\label{eq:condexp}
\bE_\bP\left[\xi\cdot Z_T\mid\cF_s\right],\quad Z_T\in L^1(K_T^*),\ \xi\in  L_{1,\infty}^0
\end{equation}
is well-defined. In fact, by the definition of $L_{1,\infty}^0$ there exists $M_s=(M_s^1,M_s^2)\in L^1(\bR_+;\cF_s,\cQ)\times L^\infty(\bR_+;\cF_s,\cQ)$ such that $\left(\xi + M_s\right)\in K_T$ and hence $\left(\xi + M_s\right)\cdot Z_T\geq 0$. In particular, we get
\begin{align*}
\bE_\bP\left[\xi\cdot Z_T\mid\cF_s\right]=\bE_\bP\left[\left(\xi + M_s\right)\cdot Z_T - M_s\cdot Z_T\mid\cF_s\right].
\end{align*}
For non-negative random variables the conditional expectation is always well-defined, although it might be infinity. Thus, 
\[
\bE_\bP\left[\left(\xi + M_s\right)\cdot Z_T\mid\cF_s\right]
\]
is well-defined. Furthermore, we need
\begin{equation}
\label{eq:condexp2}
\bE_\bP\left[M_s\cdot Z_T\mid\cF_s\right]< \infty.
\end{equation}
First, note that $M_s\cdot Z_T\geq 0$ and hence \eqref{eq:condexp2} is well-defined. Following Section 27 of \cite{loeve1978probability}, we use that $M_s$ is $\cF_s$-measurable and $Z_T\in L^1(K_T^*,\cF_T,\bP)$ to conclude
\[
\bE_\bP\left[M_s\cdot Z_T\mid\cF_s\right]=M_s\bE_\bP\left[Z_T\mid\cF_s\right]<\infty.
\]
Therefore, 
\[
\bE_\bP\left[\xi\cdot Z_T\mid\cF_s\right]\geq -M_s\bE_\bP\left[Z_T\mid\cF_s\right]>-\infty
\]
is well-defined. However, the expectation
\[
\bE_\bP\left[M_s\cdot Z_T\right]
\]
is in general not well-defined. In contrast, for $\eta\in L_b^0$ and $Z_T \in L^1(K_T^*;\cF_T,\bP)$ as in the bipolar theorem of \cite{kabanovlast} there exists $M>0$ such that
\[
\bE_\bP\left[\eta\cdot Z_T\right]\geq -M\bE_\bP\left[Z_T\right]>-\infty
\]
is well-defined.
\end{remark}\noindent
We now extend the definition of Fatou convergence of \cite{campischachermayer}, \cite{kabanovbook}, \cite{schachermayersuperreplication}.

\begin{definition}
\label{def:fatouconvergence}
	Consider a sequence $(X_n)_{n\in\bN}=(X_n^1,X_n^2)_{n\in\bN}\subset L^0( \bR^2;\cF_T,\bP)$. We say that $(X_n)_{n\in\bN}$ is $L^0(\cF_s)$-Fatou converging to $X=(X^1,X^2)$ if $X_n\xrightarrow{a.s.} X$ and $(X_n^1,X_n^2)\succeq (-M_s^1,-M_s^2)$ for all $n\in\bN$ and some $(M_s^1,M_s^2)\in L_+^1(\cF_s,\cQ)\times L_+^\infty(\cF_s,\cQ)$.
\end{definition}\noindent
If $(-M_s^1,-M_s^2)=(-M,-M)$ for some $M\in \bR_+$, $L^0(\cF_s)$-Fatou convergence coincides with the Fatou convergence\footnote{Following \cite{kabanovlast}, \cite{schachermayersuperreplication}, let $(X_n)_{n\in\bN}\subset L^0( \bR^2;\cF_T,\bP)$. We say that $(X_n)_{n\in\bN}$ is Fatou converging to $X=(X^1,X^2)$ if $X_n\xrightarrow{a.s.}X$ and $(X_n^1,X_n^2)\succeq (-M,-M)$ for all $n\in\bN$ and some $M>0$.} as defined in \cite{campischachermayer}, \cite{kabanovbook}, \cite{schachermayersuperreplication}.
\begin{remark}
	\label{rem:resultsfromschachermayer}
	Note that Lemma 3.1 of \cite{schachermayersuperreplication} also holds in our setting. We omit the proof since it holds in our setting with minor modifications. The lemma ensures that the total variation of strategies, which are $M_s$-admissible in the num\'eraire-free sense, remains bounded in $L^0(\cF_T,\bP)$. Furthermore, this implies that Theorem 3.4 and Theorem 3.6 of \cite{schachermayersuperreplication} hold true as well. Theorem 3.4 (resp. Theorem 3.6) of \cite{schachermayersuperreplication} guarantee that $A_{s,T}^{M_s}\subset L^0(\cF_s,\bP)$ (resp. $A_{s,T}^{M_s,\loc}\subset L^0(\cF_s,\bP)$) is closed with respect to the topology of convergence in measure. For more details we refer to \cite{thesis}.
\end{remark}

\begin{theorem}
\label{thm:bipolarext}
	Let $s\in[0,T]$. It holds that
	\begin{align}
	\label{ali:bipolarextension}
	 \cA_{s,T}=\left\{\xi\in L_{1,\infty}^0:\bE_\bP\left[\xi\cdot Z_T\mid\cF_s\right]\leq\esssup\limits_{\eta\in \cA_{s,T}}\bE_\bP\left[\eta\cdot Z_T\mid\cF_s\right]\ \forall Z_T \in L^1(K_T^*;\cF_T,\bP)\right\},
	\end{align}
	where $A_{s,T}$ is defined in \eqref{ali:defterminalvalues}.
\end{theorem}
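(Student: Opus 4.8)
The statement is a bipolar-type identity, so the natural strategy is to prove the two inclusions separately, with the nontrivial direction being ``$\supseteq$''. The inclusion ``$\subseteq$'' should follow almost immediately from the definitions: if $\xi = \varphi_T \in \cA_{s,T}$, then $\xi = \varphi_T^1 + \varphi_T^2$ arises from a self-financing strategy which is $M_s$-admissible in the num\'eraire-free sense for some $M_s$, so $\xi \in L_{1,\infty}^0$; and since $\varphi_T$ itself is one of the competitors in the essential supremum on the right-hand side, the inequality $\bE_\bP[\xi\cdot Z_T\mid\cF_s] \le \esssup_{\eta\in\cA_{s,T}}\bE_\bP[\eta\cdot Z_T\mid\cF_s]$ holds trivially. (One should check that the conditional expectations are all well-defined, but this is exactly the content of the Remark following Definition~\ref{def:directedupwards}.)

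For the hard inclusion ``$\supseteq$'', the plan is to reduce to the closedness statement recalled in Remark~\ref{rem:resultsfromschachermayer} together with a Hahn--Banach separation argument, conditioned on $\cF_s$. Fix $\xi \in L_{1,\infty}^0$ satisfying the inequality for every $Z_T \in L^1(K_T^*;\cF_T,\bP)$, and suppose for contradiction that $\xi \notin \cA_{s,T}$. First I would argue that $\cA_{s,T}$ is convex and closed in measure: convexity is clear from the definition of self-financing strategies (and the fact that one may add the admissibility bounds), while closedness should be assembled from the closedness of each $\cA_{s,T}^{M_s}$ (Theorem 3.4 of \cite{schachermayersuperreplication} in the present setting, as in Remark~\ref{rem:resultsfromschachermayer}) together with a diagonalization / Fatou-convergence argument over an exhausting sequence of bounds $M_s$, using the $L^0(\cF_s)$-Fatou convergence of Definition~\ref{def:fatouconvergence}. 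Then, on a set $A \in \cF_s$ of positive probability, $\xi\mathbbm{1}_A$ can be strictly separated from $\cA_{s,T}$, which yields a ``dual element'' $Z_T$; the key points are that $Z_T$ can be taken to lie in $L^1(K_T^*;\cF_T,\bP)$ (because $\cA_{s,T}$ contains $-K_T \cap L_{1,\infty}^0$, forcing the separating functional into the polar cone $K_T^*$) and that the separation can be localized on the $\cF_s$-measurable set $A$ so as to contradict the assumed conditional inequality on $A$.

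The main obstacle is the conditional/local nature of the statement: unlike the classical Kabanov--L\'ast bipolar theorem, here the lower bounds are $\cF_s$-measurable random variables rather than constants, and the inequality in \eqref{ali:bipolarextension} is an $\cF_s$-conditional one. Consequently a single global Hahn--Banach separation is not enough; one needs to run the separation ``$\omega$ by $\omega$'' on $\cF_s$, i.e. either invoke a measurable-selection / exhaustion argument to glue together separating functionals on a countable family of $\cF_s$-sets, or pass to the regular conditional probability given $\cF_s$ and apply the unconditional theorem fiberwise. Making this localization rigorous --- in particular ensuring the resulting $Z_T$ is a genuine element of $L^1(K_T^*;\cF_T,\bP)$ and not merely a finitely additive or unbounded functional --- is where the real work lies, and it is precisely here that the closedness-in-measure of $\cA_{s,T}$ (rather than closedness in some $L^p$) and the structure of $L_{1,\infty}^0$ must be used carefully. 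The remaining steps --- verifying well-definedness of all conditional expectations, and checking that the essential supremum on the right-hand side is attained along a suitable sequence so that it interacts well with the separation --- are routine given the tools already collected in Section~\ref{sec:setting} and Remark~\ref{rem:resultsfromschachermayer}.
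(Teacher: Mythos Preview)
Your overall structure (two inclusions, with ``$\subseteq$'' trivial) matches the paper, but your strategy for the hard inclusion is genuinely different. You propose a conditional/fiberwise Hahn--Banach separation, localizing on $\cF_s$-sets via measurable selection or regular conditional probabilities, and you correctly flag this localization as the delicate step. The paper avoids that difficulty entirely: instead of conditioning the separation, it reduces to the \emph{unconditional} bipolar theorem of Kabanov--Last applied to $\cA_{s,T}\cap L_b^0$, where the lower bounds are deterministic constants and so all the hypotheses of that theorem are standard. The bridge from the unconditional to the conditional statement is the observation that the family $\{\bE_\bP[\eta\cdot Z_T\mid\cF_s]:\eta\in\cA_{s,T}\cap L_b^0\}$ is directed upwards (by pasting two competitors on the $\cF_s$-set where one dominates the other), so monotone convergence yields
\[
\bE_\bP\Bigl[\esssup_{\eta\in\cA_{s,T}\cap L_b^0}\bE_\bP[\eta\cdot Z_T\mid\cF_s]\Bigr]=\sup_{\eta\in\cA_{s,T}\cap L_b^0}\bE_\bP[\eta\cdot Z_T],
\]
which immediately upgrades the Kabanov--Last conclusion to the conditional inequality on $L_b^0$. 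The passage from $L_b^0$ to $L_{1,\infty}^0$ is then a density/closure step: $L_b^0$ is dense in $L_{1,\infty}^0$ for $L^0(\cF_s)$-Fatou convergence, and $\cA_{s,T}$ is $L^0(\cF_s)$-Fatou closed by Remark~\ref{rem:resultsfromschachermayer}. So the paper's route buys a complete proof with no measurable-selection or regular-conditional-probability machinery, at the price of the extra density layer $L_b^0\subset L_{1,\infty}^0$; your route, if carried out, would be more direct in spirit but requires making the fiberwise separation rigorous, which is precisely the step you left open.
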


\begin{proof}
	The inclusion $``\subseteq"$ is trivial.\\
For the reverse inclusion we make use of the bipolar theorem of \cite{kabanovlast}, Theorem 4.2, see also Theorem 5.5.3 of \cite{kabanovbook}. Suppose the conditions of Theorem 4.2 of \cite{kabanovlast} are satisfied for $\cA_{s,T}\cap L_b^0$. Then we obtain
	\begin{align}
		\cA_{s,T}\cap L_b^0=\left\{\xi\in L_b^0:\bE_\bP\left[\xi \cdot Z_T\right]\leq \sup_{\eta\in \cA_{s,T}\cap L_b^0}\bE_\bP\left[\eta \cdot Z_T\right]\ \forall Z_T\in L^1(K_T^*;\cF_T,\bP)\right\}.
	\end{align}
	First, we show
	\begin{align}
	\label{ali:supremumequation}
	\bE_\bP\left[\esssup_{\eta\in \cA_{s,T}\cap L_b^0}\bE_\bP\left[\eta\cdot  Z_T\mid\cF_s\right]\right]=\sup_{\eta\in \cA_{s,T}\cap L_b^0}\bE_\bP\left[\eta \cdot Z_T\right].
	\end{align}
  	By monotonicity we have that
	\begin{align}
		\bE_\bP\left[\esssup_{\eta\in \cA_{s,T}\cap L_b^0}\bE_\bP\left[\eta \cdot Z_T\mid\cF_s\right]\right]\geq \sup_{\eta\in \cA_{s,T}\cap L_b^0}\bE_\bP\left[\eta\cdot  Z_T\right].
	\end{align}
	For $Z_T\in L^1(K_T^*;\cF_T,\bP)$ we define $\Phi_{Z_T}:=\{\bE_\bP[\eta Z_T\mid\cF_s]: \eta \in L_b^0\}$. It is easy to see that $\Phi_{Z_T}$ is directed upwards, since for $\eta, \tilde{\eta}\in L_b^0$ we get for 
	\begin{align}
		D_s:=\left\{\bE_\bP\left[\eta \cdot Z_T\mid\cF_s\right]\geq \bE_\bP\left[\tilde{\eta}\cdot Z_T\mid\cF_s\right]\right\}\in \cF_s
	\end{align}
	that $\psi:=\eta \mathds{1}_{D_s}+\tilde{\eta}\mathds{1}_{D_s^c}\in L_b^0$ and 
	\begin{align}
		\bE_\bP\left[\psi\cdot  Z_T\mid\cF_s\right]\geq \left(\bE_\bP\left[\eta\cdot  Z_T\mid\cF_s\right]\vee \bE_\bP\left[\tilde{\eta}\cdot Z_T\mid\cF_s\right]\right).
	\end{align}
	By Theorem A.33 of \cite{follmerschied} there exists $(\eta_n)_{n\in\bN}\subset L_b^0$ such that 
	\begin{align}
	\bE_\bP\left[\eta_n \cdot Z_T\mid\cF_s\right] \uparrow \esssup_{\eta\in \cA_{s,T}\cap L_b^0}\bE_\bP\left[\eta\cdot  Z_T\mid\cF_s\right],\text{ as }n\to \infty.
	\end{align}
	Hence, we get
	\begin{align}
		\bE_\bP\left[\esssup_{\eta\in \cA_{s,T}\cap L_b^0}\bE_\bP\left[\eta\cdot  Z_T\mid\cF_s\right]\right]=\bE_\bP\left[\lim_{n\to\infty}\bE_\bP\left[\eta_n \cdot Z_T\mid\cF_s\right]\right].
	\end{align}
	Note that $0\in \cA_{s,T}\cap L_b^0$ and thus $\esssup_{\eta\in \cA_{s,T}\cap L_b^0}\bE_\bP[\eta Z_T\mid\cF_s]\geq 0$. In particular we can assume without loss of generality that $\bE_\bP[\eta_n\cdot  Z_T\mid\cF_s]\geq 0$ a.s. for all $n\in\bN$. Thus we obtain by monotone convergence
	\begin{align}
		\bE_\bP\left[\lim_{n\to\infty}\bE_\bP\left[\eta_n \cdot Z_T\mid\cF_s\right]\right]=\lim_{n\to\infty}\bE_\bP\left[\eta_n \cdot Z_T\right]\leq \sup_{\eta\in \cA_{s,T}\cap L_b^0}\bE_\bP\left[\eta \cdot Z_T\right],
	\end{align}
	and thus \eqref{ali:supremumequation} is fulfilled. Therefore, we have that
	\begin{align}
		&\left\{\xi\in L_b^0:\bE_\bP\left[\xi \cdot Z_T\mid\cF_s\right]\leq \esssup_{\eta\in \cA_{s,T}\cap L_b^0}\bE_\bP\left[\eta\cdot  Z_T\mid\cF_s\right],\ \forall Z_T\in L^1(K_T^*;\cF_T,\bP)\right\}\\
		\subseteq &\left\{\xi\in L_b^0:\bE_\bP\left[\xi \cdot Z_T\right]\leq \sup_{\eta\in \cA_{s,T}\cap L_b^0}\bE_\bP\left[\eta \cdot Z_T\right],\ \forall Z_T \in L^1(K_T^*;\cF_T,\bP)\right\}.
	\end{align}
	In particular, under the assumption that the conditions for Theorem 4.2 of \cite{kabanovlast} are fulfilled we get
	\begin{align}
	\label{ali:eqforclosure}
		I:=\left\{\xi\in L_b^0:\bE_\bP\left[\xi \cdot Z_T\mid\cF_s\right]\leq \esssup_{\eta\in \cA_{s,T}\cap L_b^0}\bE_\bP\left[\eta \cdot Z_T\mid\cF_s\right]\ \forall Z_T\in L^1(K_T^*;\cF_T,\bP)\right\}\subseteq \cA_{s,T} \cap L_b^0.
	\end{align}
We now prove that the assumptions of Theorem 4.2 of \cite{kabanovlast} are indeed fulfilled for $\cA_{s,T}\cap L_b^0$. The arguments are similar to the proof of Theorem 4.1 of \cite{campischachermayer} and the proof of Theorem 1.5 of \cite{schachermayersuperreplication}. First, Theorem 3.6 of \cite{schachermayersuperreplication} and Remark \ref{rem:resultsfromschachermayer} imply directly that $\cA_{s,T}\cap L_b^0$ is Fatou-closed. By standard arguments $\cA_{s,T}\cap L^\infty(\cF_T,\bP)$ is dense in $\cA_{s,T}\cap L_b^0$ with respect to Fatou-convergence. In fact, let $\varphi=(\varphi^1,\varphi^2)\in \cA_{s,T}\cap L_b^0$, i.e. $\varphi\succeq (-M,-M)$ for some $M>0$. We define the sequence $\varphi_n:=\varphi \mathds{1}_{\{|\varphi|\leq n\}}-(M,M)\mathds{1}_{\{|\varphi|>n\}}$. Then $(\varphi_n)_{n\in\bN}\subset (\cA_{s,T}\cap L_b^0)\cap L^\infty(\cF_T,\bP)$ and $\varphi_n\xrightarrow{a.s.} \varphi$. Furthermore, $\varphi_n\succeq (-M,-M)$ for all $n\in\bN$ which guarantees that $\varphi_n$ Fatou-converges to $\varphi$ and that $(\cA_{s,T}\cap L_b^0)\cap L^\infty(\cF_T,\bP)$ is dense with respect to Fatou-convergence in $\cA_{s,T}\cap L_b^0$. From this construction we also observe that $-L^\infty(K_T;\cF_T,\bP)\subset \cA_{s,T}\cap L_b^0$. Therefore, the conditions of Theorem 4.2 of \cite{kabanovlast} are satisfied for $\cA_{s,T}\cap L_b^0$.\\
	It is left to show that $L_b^0$ is dense in $L_{1,\infty}^0$ with respect to $L^0(\cF_s)$-Fatou-convergence. To see this, let $\xi \in L_{1,\infty}^0$. Then $\xi\succeq (-M_s^1,-M_s^2)$ for some $(M_s^1,M_s^2)\in L_+^1(\cF_s,\cQ)\times L_+^\infty(\cF_s,\cQ)$. If we set $\xi_n:=\xi^+-(\xi^-\wedge n)$, then $\xi_n\in L_b^0$ for all $n\in\bN$, $\xi_n\succeq (-M_s^1,-M_s^2)$ and $\xi_n\xrightarrow{a.s.} \xi$. Furthermore, $\cA_{s,T}$ is $L^0(\cF_s)$-Fatou closed by Theorem 3.6 and Remark \ref{rem:resultsfromschachermayer} and hence
	\begin{align*}
		\esssup_{\eta\in \cA_{s,T}\cap L_b^0}\bE_\bP\left[\eta Z_T\mid\cF_s\right]=\esssup_{\eta\in \cA_{s,T}}\bE_\bP\left[\eta Z_T\mid\cF_s\right],
	\end{align*}
	and
	\begin{align*}
		\bar{I}=\left\{\xi\in L_{1,\infty}^0:\bE_\bP\left[\xi Z_T\mid\cF_s\right]\leq \esssup_{\eta\in \cA_{s,T}\cap L_b^0}\bE_\bP\left[\eta Z_T\mid\cF_s\right]\ \forall Z_T\in L^1(K_T^*;\cF_T,\bP)\right\},
	\end{align*}
	where the closure is taken with respect to $L^0(\cF_s)$-Fatou convergence. Then
	\begin{align*}
		\bar{I}\subset \overline{\cA_{s,T}\cap L_b^0}=\cA_{s,T}.
	\end{align*}	
	This concludes the proof.
\end{proof}\noindent

\subsection{Super-replication theorems in a dynamic setting}
\label{sec:superreplication}

In this section we prove dynamic super-replication results in the context of local and non-local consistent price systems, respectively. We start with the non-local version.

\begin{theorem}
\label{thm:superreplicationdynamic}
Let Assumption \ref{assumption2} hold and $s\in[0,T]$. Let $X_T=(X_T^1,X_T^2)$ be a contingent claim such that 
\begin{align}
\label{ali:superreplicationdynamicconditionclaim}
X_T^1+(X_T^2)^+(1-\lambda)S_T-(X_T^2)^-(1+\lambda)S_T\geq -M_s^1-M_s^2S_T,
\end{align}
for some $(M_s^1,M_s^2)\in L_+^1(\cF_s,\cQ)\times L_+^\infty(\cF_s,\cQ)$. For a random variable $X_s=(X_s^1,X_s^2)\in L_+^1(\cF_s,\cQ)\times L_+^\infty(\cF_s,\cQ)$ the following assertions are equivalent:
\begin{enumerate}
\item There is a self-financing trading strategy $\varphi=(\varphi_t^1,\varphi_t^2)_{s\leq t\leq T}$ with $\varphi_s=(X_s^1,X_s^2)$ and $\varphi_T=(X_T^1,X_T^2)$ which is admissible in a num\'eraire-free sense on the interval $[s,T]$, see \eqref{ali:admnumfree}.
\item For every consistent price system $(\bQ,\SQ)\in \CPS(s,T)$ we have
\begin{align}
\bE_\bQ\left[X_T^1-X_s^1+(X_T^2-X_s^2)\SQ_T\mid\cF_s\right]\leq 0.
\end{align}      
\end{enumerate}
\end{theorem}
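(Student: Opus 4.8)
The implication (i)$\Rightarrow$(ii) is the easy direction: given the strategy $\varphi$, fix a consistent price system $(\bQ,\SQ)\in\CPS(s,T)$ and consider the process $V_t := \varphi_t^1 + \varphi_t^2\,\SQ_t$. Using the self-financing inequality \eqref{ali:diffform} together with $(1-\lambda)S_t\le\SQ_t\le(1+\lambda)S_t$, one checks that $V$ is a $\bQ$-supermartingale on $[s,T]$ (the finite-variation ``cost'' terms integrate $\d\varphi$ against $\SQ$ and are nonincreasing; here the num\'eraire-free admissibility bound $V_\tau^{liq}\ge -M_s^1-M_s^2S_\tau$ together with $M_s^1\in L^1(\cF_s,\cQ)$, $M_s^2\in L^\infty(\cF_s,\cQ)$ and $S\in L^1$ ensures the relevant integrability, so that $V$ is a genuine supermartingale and not merely a local one). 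Also $\varphi_t^1 + \varphi_t^2\SQ_t \ge \varphi_t^1 + (\varphi_t^2)^+(1-\lambda)S_t - (\varphi_t^2)^-(1+\lambda)S_t = V_t^{liq}$, so at $\tau=T$ we get $\varphi_T^1+\varphi_T^2\SQ_T = X_T^1 + X_T^2\SQ_T$. The supermartingale property then gives $\bE_\bQ[X_T^1+X_T^2\SQ_T\mid\cF_s]\le \varphi_s^1+\varphi_s^2\SQ_s = X_s^1+X_s^2\SQ_s$; subtracting and using $\SQ_s\le (1+\lambda)S_s$ on the $(X_s^2)^+$ part is not even needed since $X_s^2\ge 0$, and rearranging yields (ii). One must be slightly careful that $\SQ_s$ is $\cF_s$-measurable so it pulls out of the conditional expectation on the left; since $X_s$ is $\cF_s$-measurable too this is immediate.

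For the nontrivial direction (ii)$\Rightarrow$(i), the plan is to reduce the existence of a super-replicating strategy to membership of $X_T-X_s$ in $\cA_{s,T}^{M_s}$ for a suitable $M_s$, and then to invoke the bipolar characterization of Theorem \ref{thm:bipolarext}. Concretely, set $\xi := (X_T^1 - X_s^1,\, X_T^2 - X_s^2)$. From \eqref{ali:superreplicationdynamicconditionclaim} and $X_s\in L_+^1(\cF_s,\cQ)\times L_+^\infty(\cF_s,\cQ)$ one verifies $\xi\in L_{1,\infty}^0$, with lower bound $(-(M_s^1+X_s^1),\,-(M_s^2+\lceil X_s^2\rceil_\infty))$ or similar (absorbing $X_s$ into the $\cF_s$-measurable admissibility bound; note $(X_T^2-X_s^2)^- \le (X_T^2)^- + X_s^2$, and $X_s^2\in L^\infty$). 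By Theorem \ref{thm:bipolarext}, to conclude $\xi\in\cA_{s,T}$ — which by definition of $\cA_{s,T}^{M_s}$ for strategies starting at $(0,0)$ means exactly that there is a self-financing strategy from $(0,0)$ to $\xi$, hence after translation a strategy from $(X_s^1,X_s^2)$ to $(X_T^1,X_T^2)$ — it suffices to show that for every $Z_T\in L^1(K_T^*;\cF_T,\bP)$,
\begin{align}
\label{ali:toshowbipolar}
\bE_\bP[\xi\cdot Z_T\mid\cF_s] \le \esssup_{\eta\in\cA_{s,T}}\bE_\bP[\eta\cdot Z_T\mid\cF_s].
\end{align}
Since $0\in\cA_{s,T}$, the right-hand side is $\ge 0$, so for $Z_T$ with $\bE_\bP[\xi\cdot Z_T\mid\cF_s]\le 0$ on a set there is nothing to prove there; the content is on $\{\bE_\bP[\xi\cdot Z_T\mid\cF_s]>0\}$.

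The main obstacle is thus \eqref{ali:toshowbipolar}, and this is where hypothesis (ii) must be used through Proposition \ref{prop:pmartingale}. The idea: given $Z_T\in L^1(K_T^*)$, complete it to a process $Z=(Z_t^1,Z_t^2)_{s\le t\le T}\in\cZ(s,T)$ by taking martingale versions $Z_t^i=\bE_\bP[Z_T^i\mid\cF_t]$ — one checks $Z_t\in K_t^*\setminus\{0\}$ using that $K_t^*$ is described by the bid-ask constraint on $S_t$ and $S$ is a martingale-bracketed process — so that by Proposition \ref{prop:pmartingale}, $(\bQ(Z),\SQ:=Z^2/Z^1)\in\CPS(s,T)$ with $\d\bQ(Z)/\d\bP = Z_T^1/\bE[Z_T^1]$. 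Then $\bE_\bP[\xi\cdot Z_T\mid\cF_s] = \bE_\bP[(X_T^1-X_s^1)Z_T^1 + (X_T^2-X_s^2)Z_T^2\mid\cF_s]$, and dividing through by $Z_s^1$ (which is $\cF_s$-measurable and strictly positive), one recognizes the conditional $\bQ(Z)$-expectation: $\bE_\bP[\xi\cdot Z_T\mid\cF_s]/Z_s^1 = \bE_{\bQ(Z)}[(X_T^1-X_s^1) + (X_T^2-X_s^2)\SQ_T\mid\cF_s]$, using the conditional Bayes rule and $\SQ_T=Z_T^2/Z_T^1$. Hypothesis (ii) applied to $(\bQ(Z),\SQ)$ gives that this is $\le 0$ a.s., hence $\bE_\bP[\xi\cdot Z_T\mid\cF_s]\le 0\le\esssup_{\eta\in\cA_{s,T}}\bE_\bP[\eta\cdot Z_T\mid\cF_s]$, which is \eqref{ali:toshowbipolar}. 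Two technical points need care: first, that every $Z_T\in L^1(K_T^*)$ with $Z_T\ne 0$ gives a nondegenerate process (on $\{Z_T^1=0\}$ the constraint forces $Z_T^2=0$ too, so one may restrict attention to $Z_T$ with $Z_T^1>0$ a.s., approximating general $Z_T$, or argue the inequality trivially holds when $\bE[Z_T^1]=0$ on relevant sets); second, the interchange of conditional expectation with division by $Z_s^1$ and the Bayes formula require the integrability already established in the Remark following Theorem \ref{thm:bipolarext}, which is exactly why the definition of $L_{1,\infty}^0$ and the integrability of $M_s$ against $Z_T$ were set up there. Finally, once $\xi\in\cA_{s,T}$, unwinding the definition of $\cA_{s,T}^{M_s}$ and translating the strategy by the constant (in time) $\cF_s$-measurable endowment $(X_s^1,X_s^2)$ produces the desired $M_s$-admissible strategy $\varphi$ with $\varphi_s=(X_s^1,X_s^2)$ and $\varphi_T=(X_T^1,X_T^2)$, completing the proof.
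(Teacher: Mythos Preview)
Your direction (i)$\Rightarrow$(ii) is fine and matches the paper: $V_t=\varphi_t^1+\varphi_t^2\SQ_t$ is a $\bQ$-supermartingale and the conclusion follows.

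The direction (ii)$\Rightarrow$(i) has a genuine gap. You correctly reduce the problem, via Theorem~\ref{thm:bipolarext}, to checking \eqref{ali:toshowbipolar} for every $Z_T\in L^1(K_T^*;\cF_T,\bP)$, and you correctly observe that the right-hand side is $\ge 0$. The problem is your construction of a CPS from $Z_T$: you claim that setting $Z_t^i:=\bE_\bP[Z_T^i\mid\cF_t]$ yields $Z_t\in K_t^*$, and justify this by ``$S$ is a martingale-bracketed process''. This is false in general. The membership $Z_T\in K_T^*$ means $(1-\lambda)S_TZ_T^1\le Z_T^2\le (1+\lambda)S_TZ_T^1$; taking $\bE_\bP[\,\cdot\mid\cF_t]$ gives $Z_t^2\le(1+\lambda)\bE_\bP[S_TZ_T^1\mid\cF_t]$, and there is no reason for the latter to be $\le(1+\lambda)S_tZ_t^1$ unless $S$ is a $\bP$-martingale, which is not assumed. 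A two-point example where $S_T\in\{2S_t,\tfrac12 S_t\}$ with equal conditional probability and $Z_T=(1,(1+\lambda)S_T)$ already produces $Z_t^2/Z_t^1=\tfrac54(1+\lambda)S_t\notin[(1-\lambda)S_t,(1+\lambda)S_t]$. So your bridge from $Z_T$ to a CPS, and hence your use of hypothesis~(ii), breaks down.

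The paper closes this gap by a different mechanism. It argues by contrapositive: if $\xi\notin\cA_{s,T}$, the bipolar theorem provides $Y\in L^1(K_T^*)$ and a set $B^Y\in\cF_s$ of positive probability on which $\bE_\bP[\xi\cdot Y\mid\cF_s]>\esssup_{\eta\in\cA_{s,T}}\bE_\bP[\eta\cdot Y\mid\cF_s]$. One then sets $Z_t^i:=\bE_\bP[Y^i\mathds{1}_{B^Y}\mid\cF_t]$ and proves $Z_t\in K_t^*$ \emph{not} by direct computation but by testing against the admissible strategies $\psi_u=-\nu\gamma\mathds{1}_{[t,T]}(u)$ with $\gamma\in L^\infty(K_t;\cF_t,\bP)$: since $\psi\in\cA_{s,T}$, the esssup bound on $B^Y$ forces $\bE_\bP[\nu\gamma\cdot Z_t\mid\cF_s]\ge 0$, hence $Z_t\in K_t^*$. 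Because $Z$ may vanish on a set, one finally takes a convex combination (with $\cF_s$-measurable weight) with an arbitrary CPS to obtain a bona fide $\hat Z\in\cZ(s,T)$ still separating $\xi$. In short, the property $Z_t\in K_t^*$ is a \emph{consequence} of the bipolar/duality structure of $\cA_{s,T}$, not of any martingale property of $S$; your argument needs to incorporate this step.
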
\noindent

\begin{proof}[Proof of Theorem \ref{thm:superreplicationdynamic}]
	$``i)\Rightarrow ii)"$ Let $\varphi=(\varphi_t^1,\varphi_t^2)_{s\leq t\leq T}$ be an admissible strategy in the num\'eraire-free sense such that $\varphi_s=(X_s^1,X_s^2)$ and $\varphi_T=(X_T^1,X_T^2)$. By Proposition 3 of \cite{schachermayeradmissible} and Remark 2.8 of \cite{biagini2019asset}, $(\varphi_t^1+\varphi_t^2\SQ_t)_{s\leq t\leq T}$ is an optional strong supermartingale for all $(\bQ,\SQ)\in\CPS(s,T)$. In particular, we obtain
	\begin{align}\bE_\bQ\left[X_T^1+X_T^2\SQ_T\mid\cF_s\right]=\bE_\bQ\left[\varphi_T^1+\varphi_T^2\SQ_T\mid\cF_s\right]
	\leq \varphi_s^1+\varphi_s^2\SQ_s=X_s^1+X_s^2\SQ_s,
	\end{align}
	for all $(\bQ,\SQ)\in\CPS(s,T)$.\\
	$``ii)\Rightarrow i)"$ Note that for any contingent claim $X_T$, there is an admissible strategy in the num\'eraire-free sense $\varphi$ with $\varphi_s=(X_s^1,X_s^2)$ and $\varphi_T=(X_T^1,X_T^2)$ if and only if $\tilde{X}_T:=(X_T^1-X_s^1,X_T^2-X_s^2)\in\cA_{s,T}$, i.e., if there is an admissible strategy in the num\'eraire-free sense $\tilde{\varphi}$ with $\tilde{\varphi}_s=(0,0)$ and $\tilde{\varphi}_T=(\tilde{X}_T^1,\tilde{X_T^2})$. Thus, it is enough to show that for $\tilde{X}_T\notin\cA_{s,T}$ there exists a consistent price system in the non-local sense $(\bQ,\SQ)\in\CPS(s,T)$ such that 
	\begin{align}
		\bP(\bE_\bQ[\tilde{X}_T^1+\tilde{X}_T^2\SQ_T\mid\cF_s]>0)>0.
	\end{align}
Let $\tilde X_T\notin\cA_{s,T}$, then by Theorem \ref{thm:bipolarext} there exists $Y=(Y^1,Y^2)\in L^1(K_T^*;\cF_T,\bP)$ such that
	 \begin{align}
		\label{ali:notina}
		\bP\left(B^Y\right)>0,
	\end{align}
	where
	\begin{align}
	\label{ali:notina2}
		B^Y:=\left\{\omega\in\Omega:\ \bE_\bP\left[Y^1\tilde X_T^1+Y^2\tilde X_T^2\mid\cF_s\right](\omega)>\esssup_{\eta\in\cA_{s,T}}\bE_\bP\left[\eta^1Y^1+\eta^2 Y^2\mid\cF_s\right](\omega)\right\}\in\cF_s.
	\end{align}
We now construct a consistent price system $\hat{Z}=(\hat{Z}_t^1,\hat{Z}_t^2)_{s\leq t\leq T}$ as follows. By Proposition \ref{prop:pmartingale} we can represent any consistent price system in the non-local sense $(\bQ,\SQ)$ by a pair $(Z^1,Z^2)$ such that $Z^i$ is a $\bP$-martingale for $i=1,2$ and $Z_t^2/Z_t^1$ takes values in the bid-ask spread, i.e., $(Z_t^1,Z_t^2)\in K_t^*\backslash\{0\}$ almost surely. Conversely, if for a process $(Z^1,Z^2)$ we have that $(Z_t^1,Z_t^2)\in K_t^*\backslash \{0\}$ and $Z^i$ is a $\bP$-martingale for $i=1,2$, then $\d\bQ(Z)/d\bP:=Z_T^1/\bE_\bP[Z_T^1]$ and $\SQ(Z):=Z^2/Z^1$ defines a consistent price system in the non-local sense.\\
We start by defining the process $Z=(Z_t^1,Z_t^2)_{s\leq t\leq T}$ by $Z_t^i:=\bE_\bP\left[Y^i\mathds{1}_{B^Y}\mid\cF_t\right]$, $s\leq t\leq T$, $i=1,2$. Note that $0\in\cA_{s,T}$. Therefore we obtain by \eqref{ali:notina} and \eqref{ali:notina2}, that 
\[
\bP(\bE_\bP[Z_T^1\tilde X_T^1+Z_T^2\tilde X_T^2\mid\cF_s]>0)\geq \bP(B^Y)>0.
\]
We show that $Z_t\in K_t^*$ a.s. for $t\in [s,T]$. In particular, if $Z$ was $\bR_+\backslash\{0\}$-valued it is a consistent price system in the non-local sense.\\
Consider the process 
	\begin{align}
	\label{ali:processforconstructingcps}
		\psi_u:=-\nu\gamma\mathds{1}_{[t,T]}(u),\quad u\in[s,T],
	\end{align}
	for some $t\in [s,T]$ and arbitrary random variables $\nu\in L^\infty(\bR_+;\cF_t,\bP)$ and $\gamma\in L^\infty(K_T;\cF_t,\bP)$. As $\psi$ is almost surely bounded, $\psi$ is an admissible strategy in the num\'eraire-free sense and
	\begin{align}
	\label{ali:equalityforconstructioncps}
		\bE_\bP\left[Y\cdot \psi_T\mathds{1}_{B^Y}\mid\cF_s\right]=\bE_\bP\left[Z_T\cdot\psi_T\mid\cF_s\right]=-\bE_\bP\left[\nu\gamma\cdot Z_t\mid\cF_s\right].
	\end{align} 
Because $\psi\in \cA_{s,T}$ \eqref{ali:notina2} and \eqref{ali:equalityforconstructioncps} imply for $\omega\in B^Y$ that
	\[
	\bE_\bP\left[\nu\gamma Z_t\mid\cF_s\right](\omega)>-\bE_\bP\left[Y \tilde X_T\mathds{1}_{B^Y}\mid \cF_s\right](\omega)=-ä\left(\bE_\bP\left[Y\cdot \tilde X_T\mid\cF_s\right]\mathds{1}_{B^Y}\right)(\omega),
	\]
where we used that $B^Y\in \cF_s$. For $\omega\in (B^Y)^c$ we have
\[
\bE_\bP\left[\nu\gamma Z_t\mid\cF_s\right](\omega)= -\left(\bE_\bP\left[Y\cdot \tilde X_T\mid\cF_s\right]\mathds{1}_{B^Y}\right)(\omega)=0.
\]
Because $\nu$ is arbitrary, we can deduce that $Z_t\gamma\geq 0$ for all $\gamma\in L^\infty(K_t;\cF_t,\bP)$, yielding $Z_t\in K_t^*$ a.s.. Thus $Z$ is a $\bP$-martingale satisfying $Z_t\in K_t^*$ for all $t\in[s,T]$. 	\\
It is still possible that $\bP(Z_t=0)>0$ for some $t\in [s,T]$ and thus $Z$ is not necessarily a consistent price systems. We now construct the desired consistent price system $\hat Z=(\hat Z_t^1,\hat Z_t^2)_{t\in [s,T]}$ as follows. Take any consistent price system in the non-local sense $\tilde{Z}=(\tilde{Z}_t^1,\tilde{Z}_t^2)_{s\leq t \leq T}$. Then for a suitable\footnote{For instance:\begin{align*}
		\beta(\omega):=\begin{cases}
			1,&\quad \omega \in (B^Y)^c,\\
			\frac{\bE_\bP\left[Z_T\cdot\tilde X_T\mid\cF_s\right]}{\left|\bE_\bP\left[\tilde{Z}_T\cdot X_T\mid\cF_s\right]\right|+\bE_\bP\left[Z_T\cdot \tilde X_T\mid\cF_s\right]},&\quad  \omega\in B^Y.
		\end{cases}
	\end{align*}} $\beta\in L^\infty((0,1);\cF_s,\bP)$ we define $\hat{Z}_t^i:=(\beta \tilde{Z}_t^i+(1-\beta)Z_t^i)$, $i=1,2$. Then $\hat Z_t\in K_t^*\backslash\{0\}$, $t\in [s,T]$, and
	\begin{align}
		\bP\left(\bE_\bP\left[\hat{Z}_T^1\tilde X_T^1+\hat{Z}_T^2\tilde X_T^2\mid\cF_s\right]>0\right)>0.
	\end{align}
	Clearly, $\hat{Z}$ is strictly positive and hence a consistent price system in the non-local sense on $[s,T]$.
	For $\tilde X_T\notin \cA_{s,T}$ we have constructed a consistent price system in the non-local sense $\hat{Z}=(\hat{Z}_t^1,\hat{Z}_t^2)_{s\leq t\leq T}$ satisfying $\bP(\bE_\bP[\hat{Z}_T^1\tilde X_T^1+\hat{Z}_T^2\tilde X_T^2\mid\cF_s]>0)>0$. This concludes the proof.
\end{proof}\noindent
Now we can also prove the local or num\'eraire-based version of the super-replication theorem.

\begin{theorem}
\label{thm:superreplicationdynamiclocal}
Let Assumption \ref{assumption} hold and $s\in[0,T]$. Let $X_T=(X_T^1,X_T^2)$ be a contingent claim such that 
\begin{align}
\label{ali:superreplicationdynamiclocaladm}
X_T^1+(X_T^2)^+(1-\lambda)S_T-(X_T^2)^-(1+\lambda)S_T\geq -M_s
\end{align}
for some $M_s\in L_+^1(\cF_s,\cQ_\loc)$. For a random variable $X_s=(X_s^1,0)\in L^1(\cF_s,\cQ_\loc)$ the following assertions are equivalent:
\begin{enumerate}
\item\label{item:superreploc1} There is a self-financing trading strategy $\varphi=(\varphi_t^1,\varphi_t^2)_{s\leq t\leq T}$ with $\varphi_s=(X_s^1,0)$ and $\varphi_T=(X_T^1,X_T^2)$ which is admissible in a num\'eraire-based sense on the interval $[s,T]$, see \eqref{ali:admnumbased}.
\item\label{item:superreploc2} For every consistent price system $(\bQ,\SQ)\in \CPS_\loc(s,T)$ we have
\begin{align}
\label{ali:superreplicationdynamicalbound}
\bE_\bQ\left[X_T^1-X_s^1+X_T^2\SQ_T\mid\cF_s\right]\leq 0.
\end{align}
\end{enumerate}
\end{theorem}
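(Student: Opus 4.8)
The plan is to mirror the structure of the num\'eraire-free proof (Theorem \ref{thm:superreplicationdynamic}), replacing $\CPS$ by $\CPS_\loc$ and the num\'eraire-free admissibility by num\'eraire-based admissibility throughout. For the implication $``\ref{item:superreploc1})\Rightarrow\ref{item:superreploc2})"$, let $\varphi=(\varphi^1,\varphi^2)$ be num\'eraire-based admissible with $\varphi_s=(X_s^1,0)$ and $\varphi_T=(X_T^1,X_T^2)$. Fix $(\bQ,\SQ)\in\CPS_\loc(s,T)$. By the analogues of Proposition 3 of \cite{schachermayeradmissible} and Remark 2.8 of \cite{biagini2019asset} for the local/num\'eraire-based setting, the wealth process $(\varphi_t^1+\varphi_t^2\SQ_t)_{s\leq t\leq T}$ is an optional strong supermartingale under $\bQ$ — here one uses that the num\'eraire-based admissibility bound $\varphi_t^1+(\varphi_t^2)^+(1-\lambda)S_t-(\varphi_t^2)^-(1+\lambda)S_t\geq -M_s$ with $M_s\in L_+^1(\cF_s,\cQ_\loc)$ is exactly what is needed to control the local martingale part and invoke the supermartingale property (rather than just the local-supermartingale property). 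Taking conditional expectations gives $\bE_\bQ[X_T^1+X_T^2\SQ_T\mid\cF_s]\leq \varphi_s^1+\varphi_s^2\SQ_s=X_s^1$, which is \eqref{ali:superreplicationdynamicalbound}.

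For the reverse implication $``\ref{item:superreploc2})\Rightarrow\ref{item:superreploc1})"$, I would reduce as before to showing: if $\tilde X_T:=(X_T^1-X_s^1,X_T^2)\notin\cA_{s,T}^{M_s,\loc}$-type set — more precisely, if there is no num\'eraire-based admissible strategy from $(0,0)$ to $\tilde X_T$ — then there exists $(\bQ,\SQ)\in\CPS_\loc(s,T)$ with $\bP(\bE_\bQ[\tilde X_T^1+\tilde X_T^2\SQ_T\mid\cF_s]>0)>0$. The key tool is the num\'eraire-based analogue of the bipolar theorem: by Remark \ref{rem:resultsfromschachermayer}, Theorem 3.4 of \cite{schachermayersuperreplication} gives that $\cA_{s,T}^{M_s,\loc}$ (and the corresponding union over $M_s$) is closed in the appropriate $L^0(\cF_s)$-Fatou sense, and an argument parallel to Theorem \ref{thm:bipolarext} — now polarizing against $Z_T\in L^1(K_T^*;\cF_T,\bP)$ with the second component $Z^2$ only a \emph{local} $\bP$-martingale, i.e. $Z\in\cZ_\loc(s,T)$ — yields the bipolar identity in the local case. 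From $\tilde X_T$ outside this set one extracts $Y=(Y^1,Y^2)\in L^1(K_T^*;\cF_T,\bP)$ and a set $B^Y\in\cF_s$ of positive probability on which $\bE_\bP[Y\cdot\tilde X_T\mid\cF_s]>\esssup_{\eta}\bE_\bP[\eta\cdot Y\mid\cF_s]$, sets $Z_t:=\bE_\bP[Y\mathds{1}_{B^Y}\mid\cF_t]$, and, testing against bounded strategies $\psi_u=-\nu\gamma\mathds{1}_{[t,T]}(u)$ with $\nu\in L^\infty(\bR_+;\cF_t,\bP)$, $\gamma\in L^\infty(K_t;\cF_t,\bP)$, deduces $Z_t\in K_t^*$ a.s.; finally one convexifies with a genuine element of $\cZ_\loc(s,T)$ to obtain a strictly positive $\hat Z$, hence a consistent \emph{local} price system $(\bQ(\hat Z),\hat Z^2/\hat Z^1)\in\CPS_\loc(s,T)$ violating \eqref{ali:superreplicationdynamicalbound}.

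The main obstacle is establishing the local-case bipolar theorem and, relatedly, verifying that $Z^2$ obtained above is a \emph{local} $\bP$-martingale (it is a true $\bP$-martingale by construction as a conditional expectation, so it certainly lies in $\cZ_\loc$ — the real subtlety is on the primal side): one must check that closedness of $\cA_{s,T}^{M_s,\loc}$ in measure, together with the a priori bound on total variation of num\'eraire-based admissible strategies (Lemma 3.1 and Theorem 3.6 of \cite{schachermayersuperreplication}, cf. Remark \ref{rem:resultsfromschachermayer}), is strong enough for the Kabanov bipolar machinery to apply with the initial endowment bound $-M_s$ rather than $-M$, exactly as in the proof of Theorem \ref{thm:bipolarext}. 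I would also need to confirm that the supermartingale (not merely local-supermartingale) property in step $``\ref{item:superreploc1})\Rightarrow\ref{item:superreploc2})"$ genuinely holds under the num\'eraire-based admissibility bound; this is where the assumption $X_s=(X_s^1,0)$ with no short position and $M_s\in L_+^1(\cF_s,\cQ_\loc)$ is used, and it is the analogue of the standard fact that admissible (bounded-below) wealth processes are supermartingales under any local martingale measure.
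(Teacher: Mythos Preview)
Your easy direction $\ref{item:superreploc1})\Rightarrow\ref{item:superreploc2})$ is fine and matches the paper: the optional strong supermartingale property of $(\varphi_t^1+\varphi_t^2\SQ_t)_{s\le t\le T}$ under every $(\bQ,\SQ)\in\CPS_\loc(s,T)$ is exactly what Theorem~2 of \cite{schachermayeradmissible} supplies for num\'eraire-based admissible strategies, and this is one of the two ingredients the paper cites.

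The hard direction has a genuine gap. In the num\'eraire-free argument of Theorem~\ref{thm:superreplicationdynamic}, the test strategies $\psi_u=-\nu\gamma\mathds{1}_{[t,T]}(u)$ with $\gamma\in L^\infty(K_t;\cF_t,\bP)$ lie in $\cA_{s,T}$ simply because bounded positions are automatically admissible in the num\'eraire-\emph{free} sense. This fails in the num\'eraire-based sense: if $\gamma\in K_t$ has $\gamma^2>0$, then $\psi$ carries a short stock position $\psi_u^2=-\nu\gamma^2<0$ on $[t,T]$, and
\[
V_\tau^{liq}(\psi)=-\nu\gamma^1-\nu\gamma^2(1+\lambda)S_\tau,\qquad \tau\in[t,T],
\]
is not bounded below by any $-M_s$ with $M_s\in L_+^1(\cF_s,\cQ_\loc)$ when $S$ is unbounded. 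Hence these $\psi$ need not belong to the num\'eraire-based primal cone, and the step ``testing against $\psi$ yields $Z_t\in K_t^*$'' does not go through. Your ``main obstacle'' paragraph hovers around this but does not resolve it; the remark that the constructed $Z^2$ is a true (hence local) $\bP$-martingale addresses the dual side, not this primal obstruction.

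The paper does \emph{not} run a parallel bipolar argument here; it follows the proof of Theorem~1.4 in \cite{schachermayersuperreplication}, whose route is different from that of Theorem~1.5. There the idea is to localize so that $S$ is bounded (e.g.\ via $\tau_n=\inf\{t:S_t\ge n\}$); on the stopped problem bounded strategies \emph{are} num\'eraire-based admissible, the num\'eraire-free duality applies, and one passes to the limit using the closedness of $\cA_{s,T}^{M_s,\loc}$ in measure (Theorem~3.4 of \cite{schachermayersuperreplication}, cf.\ Remark~\ref{rem:resultsfromschachermayer}). Your supermartingale and closedness ingredients are the right ones; what is missing is this localization step in place of the direct bipolar separation.
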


\begin{proof}
 The arguments are identical to the proof Theorem 1.4 of \cite{schachermayersuperreplication}. Note that Theorem 3.4 of \cite{schachermayersuperreplication} as well as Theorem 2 of \cite{schachermayeradmissible} are also valid for our setting. For the proof and further details we refer to \cite{thesis}.
\end{proof}\noindent
These duality results can also be formulated in the following way.

\begin{proposition}
\label{prop:dualitylocal}
Let Assumption \ref{assumption} hold and $s\in [0,T]$. Let $X_T=(X_T^1,X_T^2)$ be a contingent claim such that 
\begin{align}
\label{ali:superreplicationdynamiclocaladmprop}
X_T^1+(X_T^2)^+(1-\lambda)S_T-(X_T^2)^-(1+\lambda)S_T\geq -M_s
\end{align}
for some $M_s\in L_+^1(\cF_s,\cQ_\loc)$. If
\begin{align*}
	\esssup_{(\bQ,\SQ)\in\CPS_\loc(s,T)}\bE_\bQ\left[X_T^1+X_T^2\SQ_T\mid\cF_s\right]\in L^1(\cF_s,\cQ_\loc),
\end{align*}
then we have 
\begin{align}
\label{ali:propduality}
\begin{split}
	&\essinf\left\{\xi_s^1\in L_+^1(\cF_s,\cQ_\loc): \exists\varphi\in \cV_{s,T}^\loc(\xi_s,\lambda)\text{ with }\varphi_s=(\xi_s,0),\ \varphi_T=X_T\right\}\\
		=&\esssup_{(\bQ,\SQ)\in\CPS_\loc(s,T)}\bE_\bQ\left[X_T^1+X_T^2\SQ_T\mid\cF_s\right].
\end{split}
\end{align}
\end{proposition}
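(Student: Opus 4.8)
The plan is to derive Proposition \ref{prop:dualitylocal} as a consequence of the super-replication duality in Theorem \ref{thm:superreplicationdynamiclocal}, treating the two inequalities ``$\geq$'' and ``$\leq$'' in \eqref{ali:propduality} separately. Throughout, write $p_s:=\esssup_{(\bQ,\SQ)\in\CPS_\loc(s,T)}\bE_\bQ[X_T^1+X_T^2\SQ_T\mid\cF_s]$ for the right-hand side, which by hypothesis lies in $L^1(\cF_s,\cQ_\loc)$, and let $\ell_s$ denote the essential infimum on the left-hand side.

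First I would prove $\ell_s\leq p_s$, i.e.\ that the candidate initial capital $\xi_s:=p_s$ actually super-replicates $X_T$. The point is to apply the implication $ii)\Rightarrow i)$ of Theorem \ref{thm:superreplicationdynamiclocal} with $X_s^1:=p_s$. One must check the admissibility-type hypothesis \eqref{ali:superreplicationdynamiclocaladm} on $X_T$ (this is exactly assumption \eqref{ali:superreplicationdynamiclocaladmprop}) and the integrability $X_s=(p_s,0)\in L^1(\cF_s,\cQ_\loc)$ (this is the standing hypothesis of the proposition). Then condition $ii)$ of Theorem \ref{thm:superreplicationdynamiclocal} reads $\bE_\bQ[X_T^1-p_s+X_T^2\SQ_T\mid\cF_s]\leq 0$ for every $(\bQ,\SQ)\in\CPS_\loc(s,T)$; since $p_s$ is $\cF_s$-measurable this is equivalent to $\bE_\bQ[X_T^1+X_T^2\SQ_T\mid\cF_s]\leq p_s$ for all such $(\bQ,\SQ)$, which holds by the very definition of $p_s$ as an essential supremum. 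Hence Theorem \ref{thm:superreplicationdynamiclocal} yields a num\'eraire-based admissible strategy $\varphi$ with $\varphi_s=(p_s,0)$ and $\varphi_T=X_T$, so $p_s$ belongs to the set over which $\ell_s$ is the infimum, giving $\ell_s\leq p_s$. (A small technical wrinkle: the infimum in \eqref{ali:propduality} is restricted to $\xi_s^1\in L_+^1(\cF_s,\cQ_\loc)$; one should note $p_s\geq 0$ because $0\in\CPS_\loc$-attainable bounds give a non-negative value, or more simply because the zero strategy liquidating to a non-negative quantity is available — in any case $p_s$ is non-negative as it dominates the value corresponding to a trivial lower bound, so $p_s\in L_+^1(\cF_s,\cQ_\loc)$.) I also need to match the notation $\cV_{s,T}^\loc(\xi_s,\lambda)$ with the num\'eraire-based admissible strategies of Definition \ref{def:strategies}; this is a definitional identification.

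Next I would prove the reverse inequality $\ell_s\geq p_s$. Take any $\xi_s^1\in L_+^1(\cF_s,\cQ_\loc)$ for which there exists a num\'eraire-based admissible $\varphi$ with $\varphi_s=(\xi_s^1,0)$ and $\varphi_T=X_T$. Apply the implication $i)\Rightarrow ii)$ of Theorem \ref{thm:superreplicationdynamiclocal} with $X_s^1:=\xi_s^1$: this gives $\bE_\bQ[X_T^1-\xi_s^1+X_T^2\SQ_T\mid\cF_s]\leq 0$, i.e.\ $\bE_\bQ[X_T^1+X_T^2\SQ_T\mid\cF_s]\leq \xi_s^1$, for every $(\bQ,\SQ)\in\CPS_\loc(s,T)$. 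Taking the essential supremum over all consistent local price systems yields $p_s\leq\xi_s^1$ a.s. Since this holds for every admissible candidate $\xi_s^1$, taking the essential infimum over them gives $p_s\leq\ell_s$, as desired. Combining the two inequalities yields \eqref{ali:propduality}.

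I expect the main obstacle to be purely bookkeeping rather than conceptual: verifying that $p_s:=\esssup_{(\bQ,\SQ)\in\CPS_\loc(s,T)}\bE_\bQ[X_T^1+X_T^2\SQ_T\mid\cF_s]$ satisfies the hypotheses of Theorem \ref{thm:superreplicationdynamiclocal} when used as the initial endowment — specifically its non-negativity and its membership in $L^1(\cF_s,\cQ_\loc)$ (the latter being assumed, the former needing a one-line justification) — and confirming that ``there is a num\'eraire-based admissible strategy from $(\xi_s,0)$ to $X_T$'' is literally the membership condition in the set defining $\ell_s$, i.e.\ that $\cV_{s,T}^\loc(\xi_s,\lambda)$ is the class of num\'eraire-based admissible strategies from Definition \ref{def:strategies}. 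One should also make sure the essential supremum over $\CPS_\loc(s,T)$ has the lattice (upward-directed) property so that ``$\bE_\bQ[\cdot\mid\cF_s]\leq\xi_s^1$ for all $\bQ$'' is genuinely equivalent to ``$p_s\leq\xi_s^1$'' — this is standard, as the family $\{\bE_\bQ[X_T^1+X_T^2\SQ_T\mid\cF_s]:(\bQ,\SQ)\in\CPS_\loc(s,T)\}$ is directed upwards by the usual pasting argument for consistent price systems over the interval $[s,T]$.
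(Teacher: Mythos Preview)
Your proposal is correct and follows essentially the same route as the paper: both directions are obtained directly from Theorem~\ref{thm:superreplicationdynamiclocal}, first showing that the essential supremum $p_s$ belongs to the super-replicating set (so $\ell_s\leq p_s$), and then that every super-replicating initial capital dominates each $\bE_\bQ[X_T^1+X_T^2\SQ_T\mid\cF_s]$ and hence $p_s$ (so $\ell_s\geq p_s$). Your remark on the upward-directedness is unnecessary for the inequality ``$\bE_\bQ[\cdot\mid\cF_s]\leq\xi_s^1$ for all $(\bQ,\SQ)$ $\Rightarrow$ $p_s\leq\xi_s^1$'', which holds by the very definition of the essential supremum; otherwise your bookkeeping concerns (non-negativity of $p_s$, identification of $\cV_{s,T}^\loc$) are well placed and are in fact treated more carefully than in the paper's own proof.
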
 \noindent

\begin{proof}
Since $X_T$ satisfies the conditions of Theorem \ref{thm:superreplicationdynamiclocal}, we have
\begin{align}
\label{ali:proofduality1}
\begin{split}
&\left\{X_s\in L^1(\cF_s,\cQ_\loc):\ \exists \varphi \in\cV_{s,T}^\loc(X_s,\lambda) \text{ with }\varphi_s=(X_s,0)\text{ and }\varphi_T=(X_T^1,X_T^2)\right\}\\
=&\underbrace{\left\{X_s\in L^1(\cF_s,\cQ_\loc):\ \bE_\bQ\left[X_T^1+X_T^2\SQ_T\mid\cF_s\right]\leq X_s,\ \forall(\bQ,\SQ)\in \CPS_\loc(s,T)\right\}}_{=:D_s}
\end{split}
\end{align}
It is left to show that
\begin{align}
\essinf D_s=\esssup_{(\bQ,\SQ)\in\CPS_\loc(s,T)}\bE_\bQ\left[X_T^1+X_T^2\SQ_T\mid\cF_s\right].
\end{align}
For the first direction ``$\leq$'' we get that $\esssup_{(\bQ,\SQ)\in\CPS_\loc(s,T)}\bE_\bQ[X_T^1+X_T^2\SQ_T\mid\cF_s]\in D_s$, because $\esssup_{(\bQ,\SQ)\in\CPS_\loc(s,T)}\bE_\bQ[X_T^1+X_T^2\SQ_T\mid\cF_s]\in L^1(\cF_s,\cQ_\loc)$. \\
For the reverse direction ``$\geq$'' we have that $\essinf D_s\geq \bE_\bQ[X_T^1+X_T^2\SQ_T\mid\cF_s]$ for all $(\bQ,\SQ)\in\CPS_\loc(s,T),$ which implies by the definition of the essential supremum that $$\essinf D_s\geq \esssup_{(\bQ,\SQ)\in\CPS_\loc(s,T)}\bE_\bQ[X_T^1+X_T^2\SQ_T\mid\cF_s].$$
\end{proof} \noindent

\begin{proposition}
\label{prop:dualitynonlocal}
Let Assumption \ref{assumption2} hold.  Let $X_T=(X_T^1,X_T^2)$ be a contingent claim such that 
\begin{align}
\label{ali:superreplicationdynamiclocaladm1}
X_T^1+(X_T^2)^+(1-\lambda)S_T-(X_T^2)^-(1+\lambda)S_T\geq -M_s^1-M_s^2S_\tau
\end{align}
for some $(M_s^1,M_s^2)\in L_+^1(\cF_s,\cQ)\times L_+^\infty(\cF_s,\cQ)$. If
\begin{align*}
	\esssup_{(\bQ,\SQ)\in\CPS_\loc(s,T)}\bE_\bQ\left[X_T^1+X_T^2\SQ_T\mid\cF_s\right]\in L^1(\cF_s,\cQ),
\end{align*}
then we have 
\begin{align}
\label{ali:propdualitynonlocal}
\begin{split}
	&\essinf\left\{\xi_s\in L_+(\cF_s,\cQ): \exists\varphi\in \cV_{s,T}(\xi,\lambda)\text{ with }\varphi_s=(\xi_s,0)\ \varphi_T=X_T\right\}\\
		=&\esssup_{(\bQ,\SQ)\in\CPS(s,T)}\bE_\bQ\left[X_T^1+X_T^2\SQ_T\mid\cF_s\right].
\end{split}
\end{align}
\end{proposition}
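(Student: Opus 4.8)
The plan is to mirror the proof of Proposition \ref{prop:dualitylocal} almost verbatim, now invoking the non-local super-replication duality Theorem \ref{thm:superreplicationdynamic} in place of Theorem \ref{thm:superreplicationdynamiclocal}. First I would observe that under Assumption \ref{assumption2} the contingent claim $X_T$ satisfies the hypotheses of Theorem \ref{thm:superreplicationdynamic} for the given $(M_s^1,M_s^2)\in L_+^1(\cF_s,\cQ)\times L_+^\infty(\cF_s,\cQ)$; applying that theorem with initial endowment of the form $(\xi_s,0)$ yields the set identity
\begin{align*}
&\left\{\xi_s\in L_+^1(\cF_s,\cQ):\ \exists\varphi\in\cV_{s,T}(\xi,\lambda)\text{ with }\varphi_s=(\xi_s,0),\ \varphi_T=X_T\right\}\\
=&\left\{\xi_s\in L_+^1(\cF_s,\cQ):\ \bE_\bQ\left[X_T^1+X_T^2\SQ_T\mid\cF_s\right]\leq \xi_s,\ \forall(\bQ,\SQ)\in\CPS(s,T)\right\}=:D_s.
\end{align*}
It then remains to show $\essinf D_s=\esssup_{(\bQ,\SQ)\in\CPS(s,T)}\bE_\bQ[X_T^1+X_T^2\SQ_T\mid\cF_s]$.

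For the inequality ``$\leq$'' I would note that the essential supremum over $\CPS(s,T)$ is dominated by the essential supremum over $\CPS_\loc(s,T)$, which by the standing integrability hypothesis lies in $L^1(\cF_s,\cQ)$; hence $\esssup_{(\bQ,\SQ)\in\CPS(s,T)}\bE_\bQ[X_T^1+X_T^2\SQ_T\mid\cF_s]$ is an admissible competitor in $D_s$ (it is nonnegative since $0$-strategies, or rather the claim's lower bound, force $X_T^1+X_T^2\SQ_T\geq -M_s^1-M_s^2S_T$ and one checks the essential supremum is $\geq 0$ by taking any fixed CPS and using $X_s=0$ is not needed — more directly, it dominates each $\bE_\bQ[\cdot\mid\cF_s]$, hence belongs to $D_s$). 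For the reverse inequality ``$\geq$'', any $\xi_s\in D_s$ satisfies $\xi_s\geq\bE_\bQ[X_T^1+X_T^2\SQ_T\mid\cF_s]$ for every $(\bQ,\SQ)\in\CPS(s,T)$, so by the defining property of the essential supremum $\xi_s\geq\esssup_{(\bQ,\SQ)\in\CPS(s,T)}\bE_\bQ[X_T^1+X_T^2\SQ_T\mid\cF_s]$, and taking the essential infimum over $\xi_s\in D_s$ gives the claim.

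The one genuinely delicate point — and the place I would spend the most care — is the integrability bookkeeping: the hypothesis is phrased in terms of the \emph{local} essential supremum lying in $L^1(\cF_s,\cQ)$, whereas the set $D_s$ and its essential supremum are expressed over $\CPS(s,T)$. Since $\CPS(s,T)\subseteq\CPS_\loc(s,T)$ (a true martingale in the bid-ask spread is in particular a local one), we have the pointwise a.s. domination $\esssup_{(\bQ,\SQ)\in\CPS(s,T)}\bE_\bQ[\cdot\mid\cF_s]\leq\esssup_{(\bQ,\SQ)\in\CPS_\loc(s,T)}\bE_\bQ[\cdot\mid\cF_s]$, and combined with the lower bound coming from \eqref{ali:superreplicationdynamiclocaladm1} and $Z_T\in L^1(K_T^*)$ (so that $\bE_\bQ[X_T^1+X_T^2\SQ_T\mid\cF_s]\geq -M_s^1-M_s^2\bE_\bQ[S_T\mid\cF_s]>-\infty$) this shows the non-local essential supremum is in $L^1(\cF_s,\cQ)$ as well, legitimising its membership in $D_s$. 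With that established, the two one-line inequalities above close the argument exactly as in Proposition \ref{prop:dualitylocal}, and one would simply remark that the proof is otherwise identical, referring to \cite{thesis} for the routine details.
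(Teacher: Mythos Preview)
Your proposal is correct and follows exactly the approach the paper takes: the paper's own proof consists of the single sentence ``We obtain \eqref{ali:propdualitynonlocal} with the same arguments as in the proof of Proposition \ref{prop:dualitylocal},'' so your plan to mirror that argument with Theorem \ref{thm:superreplicationdynamic} in place of Theorem \ref{thm:superreplicationdynamiclocal} is precisely what is intended. Your extra care about the integrability bookkeeping (using $\CPS(s,T)\subseteq\CPS_\loc(s,T)$ to pass from the local esssup hypothesis to membership of the non-local esssup in $D_s$) is a genuine clarification that the paper leaves implicit.
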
 \noindent

\begin{proof}
	We obtain \eqref{ali:propdualitynonlocal} with the same arguments as in the proof of Proposition \ref{prop:dualitylocal}.
\end{proof}

\begin{theorem}
\label{thm:timeindependent}
Let Assumption \ref{assumption} hold and $s\in[0,T]$. Let $X_T=(X_T^1,X_T^2)$ be a contingent claim such that
\begin{align}
\label{ali:superreplicationdynamiclocaladmtime}
X_T^1+(X_T^2)^+(1-\lambda)S_T-(X_T^2)^-(1+\lambda)S_T\geq -M_s
\end{align}
for some $M_s\in L_+^1(\cF_s,\cQ_\loc)$. Then the following identity holds:
\begin{align*}
\esssup_{(\bQ,\SQ)\in \CPS_{\loc}(s,T)}\bE_\bQ\left[X_T^1+X_T^2\SQ_T\mid\cF_s\right]=\esssup_{(\bQ,\SQ)\in \CPS_{\loc}(0,T)}\bE_\bQ\left[X_T^1+X_T^2\SQ_T\mid\cF_s\right].
\end{align*}
\end{theorem}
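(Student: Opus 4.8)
The plan is to prove the two inequalities separately, the inclusion $\CPS_\loc(0,T)\subseteq\CPS_\loc(s,T)$ (upon restriction to $[s,T]$) giving one of them for free, and a pasting/extension argument giving the other. For the easy direction ``$\leq$'': if $(\bQ,\SQ)\in\CPS_\loc(0,T)$, then by Definition \ref{def:cps} the restriction of $(\bQ,\SQ)$ to $[s,T]$ lies in $\CPS_\loc(s,T)$, since $\SQ$ is still a local $\bQ$-martingale on $[s,T]$ and the bid-ask inequalities \eqref{eq:cps} persist. Hence the essential supremum over $\CPS_\loc(0,T)$ is taken over a smaller family and is dominated by the essential supremum over $\CPS_\loc(s,T)$. (One must note that the conditional expectation $\bE_\bQ[X_T^1+X_T^2\SQ_T\mid\cF_s]$ only depends on $\bQ|_{\cF_T}$ and on $\SQ_T$, so it is unambiguous which family we are supremizing over.)

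For the reverse direction ``$\geq$'' the idea is: given $(\bQ,\SQ)\in\CPS_\loc(s,T)$, extend it to a consistent local price system on the whole interval $[0,T]$ without changing its values on $[s,T]$, so that the corresponding conditional expectation is unchanged. Concretely, fix any reference $(\bQ_0,\tilde S^{\bQ_0})\in\CPS_\loc(0,T)$ (which exists by Assumption \ref{assumption}). Using Proposition \ref{prop:pmartingale}, represent $(\bQ,\SQ)$ on $[s,T]$ by a process $Z=(Z^1,Z^2)$ with $Z_t\in K_t^*\setminus\{0\}$ and $Z^1$ a $\bP$-martingale, $Z^2$ a local $\bP$-martingale on $[s,T]$; similarly represent $(\bQ_0,\tilde S^{\bQ_0})$ by $Z^0=(Z^{0,1},Z^{0,2})$ on $[0,T]$. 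Define a new process $\hat Z$ on $[0,T]$ by $\hat Z_t := Z^0_t$ for $t<s$ and $\hat Z_t := (Z^{0,1}_s/Z^1_s)\,Z_t$ for $t\in[s,T]$ — i.e. glue the two processes at time $s$, rescaling the second by the $\cF_s$-measurable factor $Z^{0,1}_s/Z^1_s$ so that the pasted process is continuous in expectation at $s$. Then $\hat Z^1$ is a $\bP$-martingale on $[0,T]$ and $\hat Z^2$ a local $\bP$-martingale on $[0,T]$ (martingale property on $[s,T]$ is preserved under multiplication by an $\cF_s$-measurable positive random variable, and on $[0,s]$ it is that of $Z^0$, with the value at $s$ matching by the rescaling); moreover $\hat Z_t\in K_t^*\setminus\{0\}$ for all $t$, since $K_t^*$ is a cone and the rescaling factor is strictly positive. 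By Proposition \ref{prop:pmartingale} this $\hat Z$ yields a $(\hat\bQ,\tilde S^{\hat\bQ})\in\CPS_\loc(0,T)$. Because the rescaling factor is $\cF_s$-measurable and cancels in the conditional-expectation ratio defining $\hat\bQ$ relative to $\bQ$ on $\cF_s$-conditional quantities, one checks $\bE_{\hat\bQ}[X_T^1+X_T^2\tilde S^{\hat\bQ}_T\mid\cF_s]=\bE_\bQ[X_T^1+X_T^2\SQ_T\mid\cF_s]$ a.s. Taking the essential supremum over $(\bQ,\SQ)\in\CPS_\loc(s,T)$ then gives ``$\geq$''.

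The main obstacle is the bookkeeping in the pasting step: verifying that the glued process $\hat Z$ is genuinely a $\bP$-martingale (resp. local martingale) across the junction at $s$ — which forces the precise choice of the $\cF_s$-measurable rescaling factor $Z^{0,1}_s/Z^1_s$ — and then tracking how the Radon–Nikodym density and the price process $\tilde S^{\hat\bQ}=\hat Z^2/\hat Z^1$ transform, so that the $\cF_s$-conditional expectation of the claim's liquidation value is literally unchanged on $[s,T]$. A secondary technical point is to ensure the localizing sequence for $Z^2$ on $[s,T]$ combines with that for $Z^{0,2}$ on $[0,s]$ to give a bona fide localizing sequence on $[0,T]$; this is routine since one may take stopping times that equal $s$ on $[0,s]$ for the first piece and are $\geq s$ for the second. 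Integrability of $X_T$ against the new system is inherited from the lower bound \eqref{ali:superreplicationdynamiclocaladmtime} together with $M_s\in L_+^1(\cF_s,\cQ_\loc)$, exactly as in the remark following Theorem \ref{thm:bipolarext}.
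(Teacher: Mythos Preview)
Your overall strategy---restriction of a system in $\CPS_\loc(0,T)$ to $[s,T]$ for one inequality, and a pasting/extension of a system in $\CPS_\loc(s,T)$ to $[0,T]$ for the other---is the natural approach and is in line with what the paper invokes (Theorem~3.9 of \cite{biagini2019asset}). However, the concrete pasting you write down does not work. You rescale \emph{both} components of $Z$ by the single $\cF_s$-measurable factor $Z^{0,1}_s/Z^1_s$. This indeed forces $\hat Z^1_s=(Z^{0,1}_s/Z^1_s)Z^1_s=Z^{0,1}_s$, so $\hat Z^1$ glues into a $\bP$-martingale on $[0,T]$. But for the second component one gets $\hat Z^2_s=(Z^{0,1}_s/Z^1_s)Z^2_s=Z^{0,1}_s\,\SQ_s$, whereas the value inherited from the left piece is $Z^{0,2}_s=Z^{0,1}_s\,\tilde S^{\bQ_0}_s$. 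For $\hat Z^2$ to be a (local) $\bP$-martingale on $[0,T]$ one needs $\bE_\bP[\hat Z^2_s\mid\cF_u]=Z^{0,2}_u$ for every $u<s$, which amounts to $\bE_{\bQ_0}[\SQ_s-\tilde S^{\bQ_0}_s\mid\cF_u]=0$ for all $u<s$; this fails in general. Hence your $\hat Z$ is not in $\cZ_\loc(0,T)$, contrary to the claim that ``the value at $s$ match[es] by the rescaling''---it matches for $\hat Z^1$ but not for $\hat Z^2$.

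The obvious repair---rescale each component by its own factor, $\hat Z^i_t=(Z^{0,i}_s/Z^i_s)Z^i_t$ for $t\ge s$---does produce genuine (local) $\bP$-martingales on $[0,T]$ taking values in $K_t^\ast\setminus\{0\}$ on $[0,s]$. But then on $[s,T]$ the associated price process becomes $\tilde S^{\hat\bQ}_t=(\tilde S^{\bQ_0}_s/\SQ_s)\,\SQ_t$, which may leave the bid-ask spread and, more importantly, changes the $\cF_s$-conditional value of $X_T^2\tilde S^{\hat\bQ}_T$: one gets $\bE_{\hat\bQ}[X_T^1+X_T^2\tilde S^{\hat\bQ}_T\mid\cF_s]=\bE_\bQ[X_T^1\mid\cF_s]+(\tilde S^{\bQ_0}_s/\SQ_s)\,\bE_\bQ[X_T^2\SQ_T\mid\cF_s]$, not the original expression. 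So a one-to-one extension $\CPS_\loc(s,T)\to\CPS_\loc(0,T)$ that preserves the conditional expectation is not available by a naive gluing; the argument in \cite{biagini2019asset} (to which the paper defers) handles this mismatch with more care. Your plan needs to be amended at exactly this point.
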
 \noindent

\begin{proof}
	The proof is analogous to the one of Theorem 3.9 of \cite{biagini2019asset}. For further details, see \cite{thesis}.
\end{proof}

\begin{theorem}
\label{thm:timeindependentnonlocal}
Let Assumption \ref{assumption2} hold and $s\in[0,T]$. Let $X_T=(X_T^1,X_T^2)$ be a contingent claim such that 
\begin{align}
\label{ali:superreplicationdynamiclocaladmtimeloc}
X_T^1+(X_T^2)^+(1-\lambda)S_T-(X_T^2)^-(1+\lambda)S_T\geq -M_s^1-M_s^2S_T
\end{align}
for some $(M_s^1,M_s^2)\in L_+^1(\cF_s,\cQ)\times L_+^\infty(\cF_s,\cQ)$. Then the following identity holds:
\begin{align*}
\esssup_{(\bQ,\SQ)\in \CPS(s,T)}\bE_\bQ\left[X_T^1+X_T^2\SQ_T\mid\cF_s\right]=\esssup_{(\bQ,\SQ)\in \CPS(0,T)}\bE_\bQ\left[X_T^1+X_T^2\SQ_T\mid\cF_s\right].
\end{align*}
\end{theorem}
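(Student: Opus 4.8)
The plan is to reduce Theorem~\ref{thm:timeindependentnonlocal} to the num\'eraire-free super-replication duality of Theorem~\ref{thm:superreplicationdynamic}, in exactly the way Theorem~\ref{thm:timeindependent} is reduced to Theorem~\ref{thm:superreplicationdynamiclocal} in the local case (cf.\ Theorem~3.9 of \cite{biagini2019asset}). Write $p_s(\cdot\,;s):=\esssup_{(\bQ,\SQ)\in\CPS(s,T)}\bE_\bQ[X_T^1+X_T^2\SQ_T\mid\cF_s]$ and $p_s(\cdot\,;0):=\esssup_{(\bQ,\SQ)\in\CPS(0,T)}\bE_\bQ[X_T^1+X_T^2\SQ_T\mid\cF_s]$. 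First I would establish the inequality ``$\geq$'', which is essentially trivial: every $(\bQ,\SQ)\in\CPS(0,T)$ restricts to an element of $\CPS(s,T)$ (the martingale property on $[0,T]$ implies it on $[s,T]$, and the bid-ask constraint \eqref{eq:cps} is inherited), so the supremum defining $p_s(\cdot\,;0)$ is taken over a smaller family, whence $p_s(\cdot\,;s)\geq p_s(\cdot\,;0)$ a.s.

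For the reverse inequality ``$\leq$'', I would argue via the super-replication characterization. Fix a contingent claim $X_T$ satisfying \eqref{ali:superreplicationdynamiclocaladmtimeloc}, and consider the $\cF_s$-measurable random variable $\xi_s:=p_s(\cdot\,;s)$; if $\xi_s\notin L^1(\cF_s,\cQ)$ one first truncates, replacing $X_T$ by a suitable dominated claim and passing to the limit, so assume $\xi_s\in L_+^1(\cF_s,\cQ)$. By construction $\bE_\bQ[X_T^1-\xi_s+X_T^2\SQ_T\mid\cF_s]\leq 0$ for every $(\bQ,\SQ)\in\CPS(s,T)$, so by the implication $ii)\Rightarrow i)$ of Theorem~\ref{thm:superreplicationdynamic} there is a self-financing strategy $\varphi$ on $[s,T]$, admissible in the num\'eraire-free sense, with $\varphi_s=(\xi_s,0)$ and $\varphi_T=X_T$. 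The key point is now to \emph{extend} $\varphi$ to a strategy $\bar\varphi$ on the whole interval $[0,T]$: on $[0,s]$ one does nothing (holds the constant position $(\xi_s,0)$ is not $\cF_0$-measurable, so instead one needs an admissible strategy on $[0,T]$ starting from a deterministic endowment that liquidates to $X_T$). Concretely, I would invoke Theorem~\ref{thm:superreplicationdynamic} again, this time with $s=0$: it suffices to exhibit \emph{some} deterministic $x_0\in\bR_+$ with $\bE_\bQ[X_T^1-x_0+X_T^2\SQ_T]\leq 0$ for all $(\bQ,\SQ)\in\CPS(0,T)$, and then the minimal such $x_0$ dominates $\bE[p_0(\cdot\,;0)]$. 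The cleaner route, mirroring \cite{biagini2019asset}, is to show directly that any $\varphi$ admissible on $[s,T]$ from $(\xi_s,0)$ with $\xi_s=p_s(\cdot\,;s)$, when concatenated with an arbitrary admissible strategy on $[0,s]$ that produces endowment $\xi_s$ at time $s$ (such a strategy exists because, by the duality at time $0$, $\bE[\xi_s]$ lies above the super-replication price of the claim ``pay $\xi_s$ in bond at time $s$'', which by Theorem~\ref{thm:superreplicationdynamic} with terminal time $s$ is attainable), yields an admissible strategy on $[0,T]$ with terminal value $X_T$; then the $t=0$ direction of Theorem~\ref{thm:superreplicationdynamic} forces $\bE_\bQ[X_T^1+X_T^2\SQ_T]\leq \bE[\xi_s]$ for all $(\bQ,\SQ)\in\CPS(0,T)$, and conditioning gives $p_s(\cdot\,;0)\leq \xi_s = p_s(\cdot\,;s)$ after taking the essential supremum. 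Combining the two inequalities yields the claimed identity.

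The main obstacle I anticipate is the integrability bookkeeping and the concatenation argument: one must check that gluing an admissible num\'eraire-free strategy on $[0,s]$ to one on $[s,T]$ again satisfies the uniform lower bound \eqref{ali:admnumfree} with an $\cF_0$-measurable (hence constant) bound $-M^1-M^2 S_\tau$, which requires controlling the endowment $\xi_s=p_s(\cdot\,;s)$ — a priori only in $L^1(\cF_s,\cQ)$, not bounded — and possibly a truncation/limiting argument along the lines of the density of $\cA_{s,T}\cap L^\infty$ used in the proof of Theorem~\ref{thm:bipolarext}. Since this is exactly the num\'eraire-free analogue of what is done for Theorem~\ref{thm:timeindependent}, I would refer to \cite{biagini2019asset}, Theorem~3.9, and \cite{thesis} for the routine parts and only spell out where the pair $(M_s^1,M_s^2)\in L_+^1(\cF_s,\cQ)\times L_+^\infty(\cF_s,\cQ)$ replaces the single bound $M_s\in L_+^1(\cF_s,\cQ_\loc)$ of the local case.
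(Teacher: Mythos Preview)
Your inequality $p_s(\cdot\,;s)\geq p_s(\cdot\,;0)$ via restriction of $\CPS(0,T)$ to $\CPS(s,T)$ is correct and, as you say, trivial. The trouble is entirely in your treatment of the reverse inequality $p_s(\cdot\,;s)\leq p_s(\cdot\,;0)$.

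The route you sketch --- produce an admissible strategy $\varphi$ on $[s,T]$ from $(\xi_s,0)$ to $X_T$ with $\xi_s=p_s(\cdot\,;s)$, concatenate with a strategy on $[0,s]$, and invoke Theorem~\ref{thm:superreplicationdynamic} at $t=0$ --- cannot deliver what you need. Once you have an admissible strategy on $[0,T]$ terminating in $X_T$, the implication (i)$\Rightarrow$(ii) only gives \emph{upper} bounds on $\bE_\bQ[X_T^1+X_T^2\SQ_T\mid\cF_s]$ for $(\bQ,\SQ)\in\CPS(0,T)$, hence an upper bound on $p_s(\cdot\,;0)$. Your concluding line ``conditioning gives $p_s(\cdot\,;0)\leq \xi_s=p_s(\cdot\,;s)$'' is precisely such an upper bound: it is the trivial inequality again, not the one you are after. (Separately, the step from the unconditional $\bE_\bQ[X_T^1+X_T^2\SQ_T]\leq \bE[\xi_s]$ to the conditional statement is not valid in general --- but even granting it, you land on the wrong side.) The primal concatenation idea therefore does not close the argument.

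The argument the paper points to (Theorem~3.9 of \cite{biagini2019asset}) works on the \emph{dual} side: for each $(\bQ,\SQ)\in\CPS(s,T)$ one constructs $(\hat\bQ,\hat S)\in\CPS(0,T)$ with $\bE_{\hat\bQ}[X_T^1+X_T^2\hat S_T\mid\cF_s]=\bE_\bQ[X_T^1+X_T^2\SQ_T\mid\cF_s]$, by pasting the density process of $\bQ$ on $[s,T]$ onto that of a fixed reference $\bQ_0\in\cQ(0,T)$ on $[0,s]$ in the $Z$-representation of Proposition~\ref{prop:pmartingale}. This is the same $\hat Z$-type construction you see in the proofs of Theorem~\ref{thm:superreplicationdynamic} and Lemma~\ref{lem:supremumexpectation}, only now the pasting is across the time boundary $t=s$ rather than across an $\cF_s$-measurable event. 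That is the missing idea; your integrability and admissibility concerns are secondary to getting the direction of the duality argument right.
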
 \noindent

\begin{proof}
	The proof is analogous to the one of Theorem 3.9 of \cite{biagini2019asset}. For further details, see \cite{thesis}.
\end{proof}\noindent
From now on, we set $\CPS_\loc:=\CPS_\loc(s,T,\lambda)$ and $\CPS:=\CPS(s,T,\lambda)$, respectively. For sake of convenience, we use the following notation for a fixed claim $X_T=(X_T^1,X_T^2)$. We denote 
\begin{align}
	\label{ali:processf}
		F_t:=\esssup_{(\bQ,\SQ)\in\CPS_\loc}\bE_\bQ\left[X_T^1+X_T^2\SQ_T\mid\cF_t\right],\quad t\in[0,T],
\end{align}
and
\begin{align}
	F_t^\bQ:=\bE_\bQ\left[X_T^1+X_T^2\SQ_T\mid\cF_t\right],\quad t\in[0,T].
\end{align}
Next, we provide sufficient conditions such that $F=(F_t)_{t\in [0,T]}$ admits a right-continuous modification.

\subsection{Right continuity}
\label{sec:furtherproperties}

In this section we study the right-continuity of the process $F$ defined in \eqref{ali:processf}. For this purpose we need some preliminary result and further assumptions.

\begin{lemma}
	\label{lem:supremumexpectation}
	Let Assumption \ref{assumption} hold and $s\in[0,T]$. Let $X_T=(X_T^1,X_T^2)\in L^0(\bR^2;\cF_T,\bP)$ such that
\begin{align}
\label{ali:superreplicationdynamiclocalupwards}
X_T^1+(X_T^2)^+(1-\lambda)S_T-(X_T^2)^-(1+\lambda)S_T\geq -M_s
\end{align}
for some $M_s\in L_+^1(\cF_s,\cQ_\loc)$. Then for any $\bQ_0\in\cQ_\loc$ the following identity holds 
\begin{align}
\label{ali:cadlagid}
\bE_{\bQQ}\left[F_t\right]&=\sup_{(\bQ,\SQ)\in\CPS_\loc}\bE_{\bQQ}\left[F_t^\bQ\right],\quad t\in[s,T],
\end{align}
for $F$ defined in \eqref{ali:processf}.
\end{lemma}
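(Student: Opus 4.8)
The plan is to prove \eqref{ali:cadlagid} by establishing the two inequalities separately, the nontrivial one being ``$\le$''. First I would note the easy direction: for every fixed $(\bQ,\SQ)\in\CPS_\loc$ we have $F_t\ge F_t^\bQ$ $\bQ$-a.s.\ by the very definition of $F_t$ as an essential supremum, and since $\bQ_0\sim\bQ\sim\bP$ this inequality holds $\bQ_0$-a.s.; taking $\bQ_0$-expectations and then the supremum over $(\bQ,\SQ)$ yields $\bE_{\bQ_0}[F_t]\ge \sup_{(\bQ,\SQ)\in\CPS_\loc}\bE_{\bQ_0}[F_t^\bQ]$. (One should also record that $F_t$ is $\bQ_0$-integrable, which follows from the lower bound \eqref{ali:superreplicationdynamiclocalupwards}: writing $F_t^\bQ\ge -\bE_\bQ[M_s\mid\cF_t]$ and using $M_s\in L^1(\cF_s,\cQ_\loc)$ together with the supermartingale/martingale structure, $F_t$ is bounded below by a $\bQ_0$-integrable random variable, while its positive part is handled by the integrability hypothesis implicitly available through the admissibility bound; if necessary one truncates and passes to the limit by monotone convergence at the end.)

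For the reverse inequality ``$\le$'', the key step is a \emph{lattice / directed-upwards} argument for the family $\{F_t^\bQ:(\bQ,\SQ)\in\CPS_\loc\}$. Concretely, I would show that for $(\bQ_1,\tilde S^{\bQ_1}),(\bQ_2,\tilde S^{\bQ_2})\in\CPS_\loc$ and the $\cF_t$-set $D:=\{F_t^{\bQ_1}\ge F_t^{\bQ_2}\}$ one can splice the two consistent local price systems along $D$ to obtain a new pair $(\bQ_3,\tilde S^{\bQ_3})\in\CPS_\loc$ whose conditional expectation at time $t$ dominates $F_t^{\bQ_1}\vee F_t^{\bQ_2}$. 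At the level of the density/deflator processes $Z=(Z^1,Z^2)\in\cZ_\loc$ this splicing is natural: set $Z^{3}_u:=Z^{1}_u$ for $u\le t$ and, on $\{u>t\}$, follow $Z^1$ on $D$ and (a suitably renormalised copy of) $Z^2$ on $D^c$, which keeps $Z^3$ a $\bP$-(local) martingale with $Z^3_u\in K_u^*\setminus\{0\}$ — this is exactly the type of construction used in the proof of Theorem \ref{thm:superreplicationdynamic}, and it uses that $D\in\cF_t$. Once the family is directed upwards, Theorem A.33 of \cite{follmerschied} gives a sequence $(\bQ_n,\tilde S^{\bQ_n})\in\CPS_\loc$ with $F_t^{\bQ_n}\uparrow F_t$ $\bP$-a.s.\ (hence $\bQ_0$-a.s.), and monotone convergence (after subtracting the common $\bQ_0$-integrable lower bound $-\bE_{\bQ_0}[M_s\mid\cF_t]$, or its truncations) yields
\begin{align*}
\bE_{\bQ_0}[F_t]=\lim_{n\to\infty}\bE_{\bQ_0}\big[F_t^{\bQ_n}\big]\le \sup_{(\bQ,\SQ)\in\CPS_\loc}\bE_{\bQ_0}\big[F_t^{\bQ}\big],
\end{align*}
which combined with the first part proves \eqref{ali:cadlagid}.

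The main obstacle I anticipate is the pasting step together with the integrability bookkeeping. One must check carefully that the spliced deflator $Z^3$ stays in $\cZ_\loc$ — in particular that it remains a genuine $\bP$-local martingale (not merely a martingale up to time $t$ and a local martingale after), that it does not hit $0$, and that the renormalisation on $D^c$ needed to keep $Z^{3,1}$ a martingale with $\bE_\bP[Z^{3,1}_T]$ finite is admissible; this is where the hypotheses on $M_s$ and the structure of $\CPS_\loc$ (closedness-type properties inherited from Remark \ref{rem:resultsfromschachermayer} and Theorem \ref{thm:superreplicationdynamiclocal}) are used. A secondary technical point is that $F_t$ need not a priori be $\bQ_0$-integrable; this is circumvented by the standard device of replacing $F_t^\bQ$ by $F_t^\bQ\wedge n$ (or by adding back the lower bound), applying the directed-upwards/monotone-convergence argument to the truncations, and letting $n\to\infty$ using monotone convergence once more.
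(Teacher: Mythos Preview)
Your overall strategy --- the trivial inequality by monotonicity, then upward-directedness of $\{F_t^\bQ\}$ combined with Theorem A.33 of \cite{follmerschied} and monotone convergence --- is exactly the route the paper takes. The gap is in the splicing step, which you correctly flag as the obstacle but do not actually resolve.

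Concretely, take $Z^1,Z^2\in\cZ_\loc(s,T)$ corresponding to the two systems and try to define $Z^3_u:=Z^1_u$ for $u\le t$ and, on $D^c\cap\{u>t\}$, a renormalised copy of $Z^2$. If you rescale both components by the same $\cF_t$-measurable factor $c=Z^{1,1}_t/Z^{2,1}_t$, then $Z^{3,1}$ is a $\bP$-martingale and the ratio $Z^{3,2}_u/Z^{3,1}_u=\tilde S^{\bQ_2}_u$ stays in the bid-ask spread, but $Z^{3,2}$ is \emph{not} a local $\bP$-martingale across time $t$: on $D^c$ one has $\bE_\bP[cZ^{2,2}_u\mid\cF_t]=Z^{1,1}_t\,\tilde S^{\bQ_2}_t$, whereas $Z^{3,2}_t=Z^{1,2}_t=Z^{1,1}_t\,\tilde S^{\bQ_1}_t$, and these differ whenever $\tilde S^{\bQ_1}_t\neq\tilde S^{\bQ_2}_t$. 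If instead you rescale $Z^{2,2}$ by $Z^{1,2}_t/Z^{2,2}_t$ to force the local-martingale property, the ratio becomes $(\tilde S^{\bQ_1}_t/\tilde S^{\bQ_2}_t)\,\tilde S^{\bQ_2}_u$, which need not lie in $[(1-\lambda)S_u,(1+\lambda)S_u]$. So no renormalisation of the type you describe yields an element of $\cZ_\loc(s,T)$. (Your reference to Theorem \ref{thm:superreplicationdynamic} is also off: the combination there is a convex mixture $\beta\tilde Z+(1-\beta)Z$ with an $\cF_s$-measurable weight, not a time-$t$ splice.)

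The paper sidesteps this by invoking Theorem \ref{thm:timeindependent}. Since that theorem gives $\esssup_{\CPS_\loc(s,T)}F_t^\bQ=\esssup_{\CPS_\loc(t,T)}F_t^\bQ$, one works instead with $\Phi:=\{F_t^\bQ:(\bQ,\SQ)\in\CPS_\loc(t,T)\}$. On $[t,T]$ the plain indicator combination $\widehat Z_u:=\mathds{1}_{A_t}Z_u+\mathds{1}_{A_t^c}\bar Z_u$, $u\in[t,T]$, requires no renormalisation and lies in $\cZ_\loc(t,T)$ directly; the directed-upwards property is then immediate. Theorem \ref{thm:timeindependent} is used once more at the end to pass from $\sup_{\CPS_\loc(t,T)}\bE_{\bQ_0}[F_t^\bQ]$ back to $\sup_{\CPS_\loc(0,T)}\bE_{\bQ_0}[F_t^\bQ]$. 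This is the missing ingredient in your plan.
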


\begin{proof}
Let $(\bQQ,\SQQ)\in\CPS_\loc(0,T)$ and $t\in [s,T]$. By monotonicity we immediately obtain
\begin{align}
\bE_{\bQQ}\left[F_t\right]\geq \sup_{(\bQ,\SQ)\in\CPS_\loc}\bE_{\bQQ}\left[F_t^\bQ\right].
\end{align}
For the reverse inequality we use Theorem \ref{thm:timeindependent} to show that 
\begin{align*}
\Phi:=\left\{F_t^\bQ:(\bQ,\SQ)\in\CPS_\loc(t,T)\right\}
\end{align*}
 is directed upwards, see Definition \ref{def:directedupwards}. Let $\bE_\bQ[X_T^1+X_T^2\SQ_T\mid\cF_t],\bE_{\bar{\bQ}}[X_T^1+X_T^2\bSQ_T\mid\cF_t]\in\Phi$. We construct $\bE_{\widehat{\bQ}}[X_T^1+X_T^2\hSQ\mid\cF_t]\in\Phi$ such that $\bE_{\widehat{\bQ}}[X_T^1+X_T^2\hSQ_T\mid\cF_t]\geq \bE_\bQ[X_T^1+X_T^2\SQ_T\mid\cF_t]\vee\bE_{\bar{\bQ}}[X_T^1+X_T^2\bSQ_T\mid\cF_t]$. Define 
\begin{align*}
A_t:=\left\{\bE_\bQ\left[X_T^1+X_T^2\SQ_T\mid\cF_t\right]\geq \bE_{\bar{\bQ}}\left[X_T^1+X_T^2\bSQ_T\mid\cF_t\right]\right\}\in\cF_t.
\end{align*}
Let $Z=(Z^1,Z^2)$ and $\bar{Z}=(\bar{Z}^1,\bar{Z}^2)$ be the processes associated to $(\bQ,\SQ)$ and $(\bbQ,\bSQ)$ respectively, as in Proposition \ref{prop:pmartingale}. Then we define
\begin{align}
\frac{d\widehat{\bQ}}{d\bP}=\frac{\widehat{Z}_T^1}{\bE_\bP\left[\widehat{Z}_T^1\right]}:=\frac{\mathds{1}_{A_t} Z_T^1+\mathds{1}_{A_t^c}\bar{Z}_T^1}{\bE_\bP\left[\mathds{1}_{A_t} Z_T^1+\mathds{1}_{A_t^c}\bar{Z}_T^1\right]},
\end{align}
and for $t\leq u\leq T$,
\begin{align}
\widehat{Z}_u^2:=\mathds{1}_{A_t} Z_u^2+\mathds{1}_{A_t^c}\bar{Z}_u^2
\end{align}
with corresponding
\begin{align}
\hSQ_u=\frac{\widehat{Z}_u^2}{\widehat{Z}_u^1}.
\end{align}
Obviously, $\widehat{Z}=(\widehat{Z}_u^1,\widehat{Z}_u^2)_{t\leq u \leq T}$ satisfies all requirements from Definition \ref{def:cpspolar}, i.e., $\widehat{Z}\in\mathcal{Z}_\loc(t,T)$. Clearly, $(1-\lambda)S_u\leq \hSQ_u\leq (1+\lambda)S_u$ for all $u\in[t,T]$. For the local martingale property let $(\tau_n)_{n\in\bN}$ be a localizing sequence for $\SQ$ and $\bSQ$. For $t\leq u\leq v\leq T$ we get
\begin{align*}
&\bE_{\widehat{\bQ}}\left[\left(\hSQ_v\right)^{\tau_n}\mid\cF_u\right]=\bE_\bP\left[\left(\frac{\widehat{Z}_v^2}{\widehat{Z}_v^1}\right)^{\tau_n}\frac{\widehat{Z}_T^1}{\bE_\bP\left[\widehat{Z}_T^1\right]}\mid\cF_u\right]\frac{\bE_\bP\left[\widehat{Z}_T^1\right]}{\widehat{Z}_{u\wedge\tau_n}^1}\\
=&\bE_\bP\left[\left(\mathds{1}_{A_t}Z_v^1+\mathds{1}_{A_t^c}\bar{Z}_v^2\right)^{\tau_n}\mid\cF_u\right]\frac{1}{\widehat{Z}_{u\wedge\tau_n}^1}=\left(\mathds{1}_{A_t}\bE_\bP\left[(Z_v^2)^{\tau_n}\mid\cF_u\right]+\mathds{1}_{A_t^c}\bE_\bP\left[(\bar{Z}_v^2)^{\tau_n}\mid\cF_u\right]\right)\frac{1}{\widehat{Z}_{u\wedge\tau_n}^1}\\
=&\left(\mathds{1}_{A_t}Z_{u\wedge\tau_n}^2+\mathds{1}_{A_t^c}\bar{Z}_{u\wedge\tau_n}^2\right)\frac{1}{\widehat{Z}_{u\wedge\tau_n}^1}=\left(\widehat{S}_u\right)^{\tau_n},
\end{align*}
where we used that $\mathds{1}_{A_t},\mathds{1}_{A_t^c}$ are measurable for $\cF_t\subset \cF_u$. In particular, by Theorem A.33 of \cite{follmerschied}, there exists an increasing sequence $(\bE_{\bQ^n}[X_T^1+X_T^2\tSn_T\mid\cF_t])_{n\in\bN}\subset\Phi$ such that
\begin{align}
F_t=\esssup_{(\bQ,\SQ)\in\CPS_\loc(t,T)}\bE_\bQ\left[X_T^1+X_T^2\SQ_T\mid\cF_t\right]=\lim_{n\to \infty}\bE_{\bQ^n}\left[X_T^1+X_T^2\tSn_T\mid\cF_t\right].
\end{align}
By the Theorem of Monotone Convergence we obtain
\begin{align}
\nonumber
\bE_{\bQQ}\left[F_t\right]&=\lim_{n\to\infty}\bE_{\bQQ}\left[\bE_{\bQ^n}\left[X_T^1+X_T^2\tSn_T\mid\cF_t\right]\right]\\
\nonumber
&\leq \sup_{(\bQ,\SQ)\in\CPS_\loc(t,T,\lambda)}\bE_{\bQQ}\left[\bE_\bQ\left[X_T^1+X_T^2\SQ_T\mid\cF_t\right]\right]\\
\label{eq:kramkov1}
&=\sup_{(\bQ,\SQ)\in\CPS_\loc(0,T,\lambda)}\bE_{\bQQ}\left[\bE_\bQ\left[X_T^1+X_T^2\SQ_T\mid\cF_t\right]\right].
\end{align}
The last equality in \eqref{eq:kramkov1} holds due to similar arguments as in the proof of Theorem \ref{thm:timeindependent}. This concludes the proof of \eqref{ali:cadlagid}.
\end{proof}\noindent

\begin{lemma}
	\label{lem:supremumexpectationnonlocal}
	Let Assumption \ref{assumption2} hold and $s\in[0,T]$. Let $X_T=(X_T^1,X_T^2)\in L^0(\bR^2;\cF_T,\bP)$ such that
\begin{align}
\label{ali:superreplicationdynamicnonlocalupwards}
X_T^1+(X_T^2)^+(1-\lambda)S_T-(X_T^2)^-(1+\lambda)S_T\geq -M_s^1-M_s^2S_T
\end{align}
for some $(M_s^1,M_s^2)\in L_+^1(\cF_s,\cQ)\times L_+^\infty(\cF_s,\cQ)$. Then for any $\bQ_0\in\cQ$ the following identity holds 
\begin{align}
\bE_{\bQQ}\left[\esssup_{(\bQ,\SQ)\in\CPS}\bE_\bQ\left[X_T^1+X_T^2\SQ_T\mid\cF_t\right]\right]=\sup_{(\bQ,\SQ)\in\CPS}\bE_{\bQQ}\left[F_t^\bQ\right],\quad t\in[s,T].
\end{align}
\end{lemma}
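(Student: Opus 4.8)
The plan is to mirror the proof of Lemma \ref{lem:supremumexpectation}, replacing the localized argument by the (simpler) true-martingale version, so the only genuinely new input needed is Theorem \ref{thm:timeindependentnonlocal} in place of Theorem \ref{thm:timeindependent}. First I would fix $(\bQQ,\SQQ)\in\CPS(0,T)$ and $t\in[s,T]$. The inequality ``$\geq$'' is immediate: for every $(\bQ,\SQ)\in\CPS$ one has $\esssup_{(\bQ,\SQ)\in\CPS}\bE_\bQ[X_T^1+X_T^2\SQ_T\mid\cF_t]\geq \bE_\bQ[X_T^1+X_T^2\SQ_T\mid\cF_t]=F_t^\bQ$ a.s., and taking $\bE_{\bQQ}$ and then the supremum over $(\bQ,\SQ)\in\CPS$ on the right gives the claim; here one should note that the conditioning set is still $\cF_t$ in the definition of $F_t^\bQ$, consistently with \eqref{ali:processf}, and that $X_T$ satisfying \eqref{ali:superreplicationdynamicnonlocalupwards} together with $Z_T\in L^1(K_T^*;\cF_T,\bP)$ makes all the conditional expectations well-defined, exactly as in the Remark following \eqref{eq:condexp}.

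For the reverse inequality ``$\leq$'' I would show that the family
\[
\Phi:=\left\{F_t^\bQ=\bE_\bQ\left[X_T^1+X_T^2\SQ_T\mid\cF_t\right]:(\bQ,\SQ)\in\CPS(t,T)\right\}
\]
is directed upwards. Given $(\bQ,\SQ),(\bbQ,\bSQ)\in\CPS(t,T)$ with associated processes $Z=(Z^1,Z^2)$ and $\bar Z=(\bar Z^1,\bar Z^2)$ from Proposition \ref{prop:pmartingale}, set $A_t:=\{\bE_\bQ[X_T^1+X_T^2\SQ_T\mid\cF_t]\geq \bE_{\bbQ}[X_T^1+X_T^2\bSQ_T\mid\cF_t]\}\in\cF_t$ and glue: $\widehat Z_u^1:=\mathds{1}_{A_t}Z_u^1+\mathds{1}_{A_t^c}\bar Z_u^1$, $\widehat Z_u^2:=\mathds{1}_{A_t}Z_u^2+\mathds{1}_{A_t^c}\bar Z_u^2$ for $t\leq u\leq T$, and $\hSQ:=\widehat Z^2/\widehat Z^1$. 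Since $\mathds{1}_{A_t}$ is $\cF_t$-measurable and $Z^1,Z^2,\bar Z^1,\bar Z^2$ are true $\bP$-martingales on $[t,T]$, the processes $\widehat Z^1,\widehat Z^2$ are again true $\bP$-martingales (no localizing sequence is needed, which is precisely where the non-local case is easier than Lemma \ref{lem:supremumexpectation}), $\widehat Z_u\in K_u^*\setminus\{0\}$ by convexity of $K_u^*$ read off from \eqref{ali:polarcone}, hence $\widehat Z\in\cZ(t,T)$, so $(\widehat\bQ,\hSQ)\in\CPS(t,T)$ with $d\widehat\bQ/d\bP=\widehat Z_T^1/\bE_\bP[\widehat Z_T^1]$. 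One then checks $\bE_{\widehat\bQ}[X_T^1+X_T^2\hSQ_T\mid\cF_t]=\mathds{1}_{A_t}\bE_\bQ[X_T^1+X_T^2\SQ_T\mid\cF_t]+\mathds{1}_{A_t^c}\bE_{\bbQ}[X_T^1+X_T^2\bSQ_T\mid\cF_t]$, which equals the pointwise maximum of the two conditional expectations by the choice of $A_t$; hence $\Phi$ is directed upwards.

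By Theorem A.33 of \cite{follmerschied} there is an increasing sequence $(\bE_{\bQ^n}[X_T^1+X_T^2\tSn_T\mid\cF_t])_{n\in\bN}\subset\Phi$ converging a.s.\ upward to $\esssup_{(\bQ,\SQ)\in\CPS(t,T)}\bE_\bQ[X_T^1+X_T^2\SQ_T\mid\cF_t]$, which by Theorem \ref{thm:timeindependentnonlocal} coincides a.s.\ with $\esssup_{(\bQ,\SQ)\in\CPS(0,T)}\bE_\bQ[X_T^1+X_T^2\SQ_T\mid\cF_t]$. Applying monotone convergence under $\bQQ$ (one may add the fixed $\cF_s$-measurable lower bound $-M_s^1-M_s^2S_T$, or rather its conditional expectation, to reduce to nonnegative integrands, exactly as in the proof of Theorem \ref{thm:bipolarext}), I get
\[
\bE_{\bQQ}\left[\esssup_{(\bQ,\SQ)\in\CPS}\bE_\bQ\left[X_T^1+X_T^2\SQ_T\mid\cF_t\right]\right]=\lim_{n\to\infty}\bE_{\bQQ}\left[\bE_{\bQ^n}\left[X_T^1+X_T^2\tSn_T\mid\cF_t\right]\right]\leq \sup_{(\bQ,\SQ)\in\CPS}\bE_{\bQQ}\left[F_t^\bQ\right],
\]
where the last equality-to-supremum step again uses Theorem \ref{thm:timeindependentnonlocal} to pass from $\CPS(t,T)$ to $\CPS(0,T)=\CPS$, just as \eqref{eq:kramkov1} does in the local case. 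Combining the two inequalities yields the stated identity. The only place demanding care — the ``main obstacle'' — is verifying integrability so that the monotone convergence step and the interchange of $\bE_{\bQQ}$ with the limit are legitimate despite $X_T$ being only bounded below by $-M_s^1-M_s^2S_T$ rather than by a constant; this is handled precisely as in the well-definedness discussion following \eqref{eq:condexp}, using that $M_s^1\in L^1(\cF_s,\cQ)$, $M_s^2\in L^\infty(\cF_s,\cQ)$ and $\bQQ\in\cQ$, so $\bE_{\bQQ}[M_s^1+M_s^2S_T]<\infty$.
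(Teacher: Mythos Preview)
Your proof is correct and follows exactly the approach the paper intends: the paper's own proof of this lemma simply states that the arguments are identical to those of Lemma~\ref{lem:supremumexpectation}, and you have carried out precisely that adaptation, replacing Theorem~\ref{thm:timeindependent} by Theorem~\ref{thm:timeindependentnonlocal} and observing that the true-martingale property of $Z^1,Z^2$ makes the localizing sequence unnecessary.
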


\begin{proof}
	The arguments are identical to the proof of Lemma \ref{lem:supremumexpectation}.
\end{proof}\noindent
Let $(\sigma_n)_{n\in\bN}$ be a sequence of decreasing, $[0,T]$-valued stopping times with $\sigma_n\downarrow \sigma=\sigma_\infty$ as $n$ tends to infinity. In the sequel we set $\bar\bN:=\bN\cup \{\infty\}$.\\
Lemma \ref{lem:supremumexpectation} shows that for any $n\in\bar{\bN}$ there exists a sequence $(\bQ(m_k(n)),\SQ(m_k(n)))_{k\in\bN}\subset\CPS_\loc(0,T)$ such that $\bE_\bQQ[F_{\sigma_n}^{\bQ(m_k(n))}]\uparrow \bE_\bQQ[F_{\sigma_n}]$ as $k$ tends to infinity. Further, it is easy to see that these sequences can be taken uniformly over $n\in\bar\bN$. For $n\in\bar\bN$ take the subsequence $(m_{k_l}(n))_{l\in\bN}\subset (m_k(n))_{k\in\bN}$ defined by
\begin{align*}
	m_{k_1}(n)&:=\inf\left\{k\geq 1: \left|\bE_\bQQ\left[F_{\sigma_n}^{\bQ(m_k(n))}\right]-\bE_\bQQ\left[F_{\sigma_n}\right]\right|<1\right\},\\
	m_{k_l}(n)&:=\inf\left\{k > m_{k_{l-1}}(n): \left|\bE_\bQQ\left[F_{\sigma_n}^{\bQ(m_k(n))}\right]-\bE_\bQQ\left[F_{\sigma_n}\right]\right|<\frac{1}{l}\right\}.
\end{align*}

\begin{assumption}
	\label{assumptioncadlag}
	We assume the existence of $\bQ_0\in\cQ_\loc$ such that for any decreasing sequence of stopping times $0\leq(\sigma_n)_{n\in\bN}\leq T$ with $\sigma_n\downarrow \sigma$ as $n$ tends to infinity, there exists a sequence $$(\bQ(m_k(n)),\SQ(m_k(n))_{k\in\bN}\subset\CPS_\loc(0,T),\ n\in\bar\bN,$$ such that $(\bE_\bQQ[F_{\sigma_n}^{\bQ(m_k(n))}])_{k\in\bN}$ converges uniformly over all $n\in\bar\bN$ to $\bE_\bQQ[F_{\sigma_n}]$, i.e., for all $\epsilon>0$ there exists $K=K(\epsilon)\in\bN$ such that
\begin{align}
\label{ali:assumption3sequence}
	\left|\bE_\bQQ\left[F_{\sigma_n}^{\bQ(m_k(n))}\right]-\bE_\bQQ\left[F_{\sigma_n}\right]\right|<\epsilon,\text{ for all }k\geq K,\text{ and for all }n\in\bar\bN,
\end{align}	
	and that for all $k\in\bN$, $\bQQ(\bigcup_{N\in\bN}A_N^{\epsilon,k})=1 $, where
\begin{align}
\label{ali:assumption3setan}
	A_N^{\epsilon,k}:=\left\{\omega\in\Omega: |F_{\sigma_n}^{\bQ(m_k(n_0))}-F_{\sigma}^{\bQ(m_k(n_0))}|(\omega)<\epsilon,\ \forall n\geq N,\ \forall n_0\in \bar\bN\right\}.
\end{align}
\end{assumption}\noindent
Assumption \ref{assumptioncadlag} can be thought as equi-continuity in time at level $k$, of a family of approximating sequences of consistent price systems. A uniformly approximating sequence does always exist, see Lemma \ref{lem:supremumexpectation} and the comment thereafter, but in general it is not true that $\bQQ(\bigcup_{N\in\bN}A_N^{\epsilon,k})=1 $ for $A_N^{\epsilon,k}$ given in \eqref{ali:assumption3setan}. This is the key feature of Assumption \ref{assumptioncadlag}.
\begin{remark}
	In the special case of $X_T=(0,1)$ Assumption \ref{assumptioncadlag} reads as follows: Fix an approximating sequence such that \eqref{ali:assumption3sequence} holds. Then 
	\begin{equation}
\label{ali:assumption3setan}
	A_N^{\epsilon,k}:=\Set{\omega\in\Omega:\begin{array}{l}\left|\bE_{\bQ(m_k(n_0))}\left[\tilde{S}_T^{\bQ(m_k(n_0))}\mid\cF_{\sigma_n}\right]-\bE_{\bQ(m_k(n_0))}\left[\tilde{S}_T^{\bQ(m_k(n_0))}\mid\cF_\sigma\right]\right|(\omega)<\epsilon,\\ \forall n\geq N,\ \forall n_0\in \bar\bN\end{array}}
\end{equation}
\end{remark}

\begin{theorem}
\label{thm:cadlag}
Suppose that Assumption \ref{assumption} and \ref{assumptioncadlag} hold and $s\in[0,T]$. Let $X_T=(X_T^1,X_T^2)\in L^\infty(\cF_T,\bP)\times L^\infty(\cF_T,\bP)$ such that
\begin{align}
\label{ali:superreplicationdynamiclocaladmcadlag}
X_T^1+(X_T^2)^+(1-\lambda)S_T-(X_T^2)^-(1+\lambda)S_T\geq -M_s
\end{align}
for some $M_s\in L_+^1(\cF_s,\cQ_\loc)$. Then $F$ in \eqref{ali:processf} admits a right-continuous modification with respect to $\bP$.
\end{theorem}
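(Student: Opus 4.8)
The plan is to establish right-continuity of $F$ by showing that along any sequence of stopping times $\sigma_n \downarrow \sigma$, we have $F_{\sigma_n} \to F_\sigma$ in an appropriate sense, and then invoke a standard supermartingale-type regularization argument. The natural strategy is to work with a fixed reference measure $\bQ_0 \in \cQ_\loc$ as in Assumption \ref{assumptioncadlag}, reduce the statement about the random variables $F_{\sigma_n}$ to a statement about their $\bQ_0$-expectations, and use Lemma \ref{lem:supremumexpectation} to pass between $\bE_{\bQQ}[F_{\sigma_n}]$ and $\sup_{(\bQ,\SQ)\in\CPS_\loc}\bE_{\bQQ}[F_{\sigma_n}^\bQ]$.

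First I would fix a decreasing sequence of $[0,T]$-valued stopping times $\sigma_n \downarrow \sigma$. Since $X_T$ is bounded and $S$ is c\`adl\`ag, each $F_{\sigma_n}^\bQ = \bE_\bQ[X_T^1+X_T^2\SQ_T\mid\cF_{\sigma_n}]$ is a uniformly integrable martingale (under the fixed $\bQ$), so $F_{\sigma_n}^\bQ \to F_\sigma^\bQ$ $\bP$-a.s.\ and in $L^1$ by the martingale convergence theorem applied to the reversed filtration $(\cF_{\sigma_n})_n$. The issue is that $F_{\sigma_n} = \esssup_\bQ F_{\sigma_n}^\bQ$ is a supremum over uncountably many $\bQ$, so a.s.\ convergence of each individual martingale does not immediately give convergence of the essential suprema. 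This is exactly what Assumption \ref{assumptioncadlag} is designed to circumvent: it provides, for each level $k$, a uniformly (over $n\in\bar\bN$) approximating family $(\bQ(m_k(n)),\SQ(m_k(n)))$ such that $\bE_\bQQ[F_{\sigma_n}^{\bQ(m_k(n))}] \to \bE_\bQQ[F_{\sigma_n}]$ uniformly in $n$, together with the a.s.\ regularity $\bQQ(\bigcup_N A_N^{\epsilon,k}) = 1$.

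Next I would run the following $3\varepsilon$-argument. Fix $\varepsilon > 0$ and pick $k = k(\varepsilon)$ from \eqref{ali:assumption3sequence} so that $|\bE_\bQQ[F_{\sigma_n}^{\bQ(m_k(n))}] - \bE_\bQQ[F_{\sigma_n}]| < \varepsilon$ for all $n\in\bar\bN$. By the a.s.\ statement $\bQQ(\bigcup_N A_N^{\varepsilon,k})=1$, we have for $\bQQ$-a.e.\ $\omega$ some $N=N(\omega)$ with $|F_{\sigma_n}^{\bQ(m_k(n_0))}(\omega) - F_\sigma^{\bQ(m_k(n_0))}(\omega)| < \varepsilon$ for all $n \geq N$ and all $n_0 \in \bar\bN$. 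Combining these, one controls $|F_{\sigma_n} - F_\sigma|$ by splitting
\[
F_{\sigma_n} - F_\sigma = \big(F_{\sigma_n} - F_{\sigma_n}^{\bQ(m_k(n))}\big) + \big(F_{\sigma_n}^{\bQ(m_k(n))} - F_\sigma^{\bQ(m_k(\infty))}\big) + \big(F_\sigma^{\bQ(m_k(\infty))} - F_\sigma\big),
\]
where the outer two terms are small in $\bE_{\bQQ}$-expectation by the uniform approximation (using that $F_{\sigma_n} \geq F_{\sigma_n}^\bQ$ for every admissible $\bQ$, so the sign is controlled), and the middle term is handled by the a.s.\ equicontinuity on $A_N^{\varepsilon,k}$ together with dominated convergence (the dominating function coming from the boundedness of $X_T$ and a uniform-integrability bound on $\SQ_T$ inherited from $S_T\in L^1$). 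This yields $\bE_{\bQQ}[|F_{\sigma_n} - F_\sigma|] \to 0$, i.e.\ $F_{\sigma_n} \to F_\sigma$ in $L^1(\bQQ)$.

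Finally, having $F_{\sigma_n} \to F_\sigma$ in $L^1(\bQQ)$ (hence, along a subsequence, $\bQQ$-a.s., which since $\bQQ \sim \bP$ is also $\bP$-a.s.) for every such sequence, I would invoke the standard regularization theorem for processes: a process $F$ that is right-continuous in probability along stopping times and whose expectations behave well admits a c\`adl\`ag, in particular right-continuous, modification. More precisely, since $F_t = \esssup_\bQ F_t^\bQ$ and each $F^\bQ$ is a uniformly integrable $\bQ$-martingale with a c\`adl\`ag modification, the family $\{F^\bQ\}$ is suitably ordered (directed upwards, by the construction in Lemma \ref{lem:supremumexpectation}), and one can use the right-continuity in $L^1$ just established to produce the right-continuous modification of $F$ by setting $\tilde F_t := \lim_{q \downarrow t, q \in \bQ} F_q$ along rationals and verifying $\tilde F_t = F_t$ a.s.\ using the $L^1$-convergence.

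The main obstacle is the middle step: extracting genuine convergence of the \emph{random variables} $F_{\sigma_n}$ from convergence of their $\bQQ$-expectations. A priori $\bE_{\bQQ}[F_{\sigma_n}] \to \bE_{\bQQ}[F_\sigma]$ only gives convergence in mean of numbers, not of random variables; it is precisely the set $A_N^{\varepsilon,k}$ in Assumption \ref{assumptioncadlag}, which encodes pathwise equicontinuity in time of the approximating consistent price systems at level $k$, that upgrades this to the needed a.s./$L^1$ convergence of $F_{\sigma_n}$ itself. Verifying that the $3\varepsilon$-decomposition above is legitimate — in particular that the approximating $\bQ(m_k(n))$ can be chosen \emph{simultaneously} for all $n$ and that the interchange of limits and (conditional) expectations is justified — is the technical heart of the argument; everything else is either the martingale convergence theorem, dominated convergence using boundedness of $X_T$, or the classical supermartingale regularization machinery.
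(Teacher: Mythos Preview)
Your overall strategy --- fix $\bQ_0$ from Assumption \ref{assumptioncadlag}, work along decreasing stopping times $\sigma_n\downarrow\sigma$, and exploit the uniform approximation \eqref{ali:assumption3sequence} together with the pathwise equicontinuity encoded in the sets $A_N^{\epsilon,k}$ --- is exactly the paper's. The gap is in your middle step. You target $\bE_{\bQQ}[|F_{\sigma_n}-F_\sigma|]\to 0$, and your $3\varepsilon$ decomposition mixes two \emph{different} consistent price systems: the middle term $F_{\sigma_n}^{\bQ(m_k(n))}-F_\sigma^{\bQ(m_k(\infty))}$ hides a cross piece $F_\sigma^{\bQ(m_k(n))}-F_\sigma^{\bQ(m_k(\infty))}$ (or the analogous one at time $\sigma_n$). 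Assumption \ref{assumptioncadlag} only yields $|F_{\sigma_n}^{\bQ(m_k(n_0))}-F_\sigma^{\bQ(m_k(n_0))}|<\varepsilon$ on $A_N^{\varepsilon,k}$ for a \emph{fixed} $n_0$, i.e.\ time-equicontinuity of each individual approximant, not a pointwise comparison across different approximants at the same time. So the middle term is not controlled in $L^1$, and the stronger $L^1$ convergence you claim does not follow from the stated hypotheses.

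The paper sidesteps this by weakening the target: it proves only $\bE_{\bQQ}[F_{\sigma_n}]\to\bE_{\bQQ}[F_\sigma]$ (convergence of numbers, not of random variables in $L^1$), which already yields a right-continuous modification by Theorem 48 of Dellacherie--Meyer. At the expectation level the modulus sits outside, and the cross-CPS term can be handled by a three-case analysis on the signs of $\bE_{\bQQ}[F_{\sigma_n}^{\bQ(m_k(n))}]-\bE_{\bQQ}[F_{\sigma_n}^{\bQ(m_k(\infty))}]$ and $\bE_{\bQQ}[F_\sigma^{\bQ(m_k(n))}]-\bE_{\bQQ}[F_\sigma^{\bQ(m_k(\infty))}]$; in each case one bounces back and forth between the two CPS's using the time-equicontinuity estimate $\bE_{\bQQ}\bigl[|F_{\sigma_n}^{\bQ(m_k(n_0))}-F_\sigma^{\bQ(m_k(n_0))}|\bigr]<\varepsilon/4$ (obtained by splitting on $A_N^k$ and invoking uniform integrability of $\{\SQQ_{\sigma_n}:n\in\bN\}$ under $\bQQ$, established via Scheff\'e, on the complement). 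This sign-based bookkeeping is the substitute for the control you are missing, and it has no pointwise analogue --- which is precisely why the paper does not attempt $L^1$ convergence and why your final ``regularize along rationals'' step is replaced by a direct citation of the Dellacherie--Meyer criterion.
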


\begin{proof}
Let $\bQ_0\in\cQ_\loc$ be the measure given by Assumption \ref{assumptioncadlag}. We now prove that $F$ in \eqref{ali:processf} admits a right-continuous modification with respect to $\bQ_0$. Since all measure $\bQ\in\cQ_\loc$ are equivalent to $\bP$, this is equivalent to show that $F$ admits a right-continuous modification with respect to $\bP$.\\
By Theorem 48 in \cite{dellacherieb}, the paths of $F$ are right-continuous (outside an evanescent set), if $\lim_{n\to\infty}\bE_{\bP}\left[F_{\sigma_n}\right]=\bE_{\bP}\left[F_{\lim_{n\to\infty}\sigma_n}\right]$ for every decreasing sequence $(\sigma_n)_{n\in\bN}$ of bounded stopping times.\\
Let $(\sigma_n)_{n\in\bN}$ be a decreasing sequence of stopping times with values in $[0,T]$ such that $\sigma_n\downarrow \sigma$ as $n$ tends to infinity. We now prove that
\begin{align*}
\lim_{n\to\infty} \bE_{\bQQ}\left[F_{\sigma_n}\right]=\bE_{\bQQ}\left[F_\sigma\right].
\end{align*}
Because $(X_T^1,X_T^2)\in L^\infty(\cF_T,\bP)\times L^\infty(\cF_T,\bP)$, there exists $C_1,C_2\in\bR$ such that $|X_T^1|\leq C_1$ and $|X_T^2|\leq C_2$. For $t\in[0,T]$ we thus have,
\begin{align}
\nonumber
	|F_t|&\leq \esssup_{(\bQ,\SQ)\in\CPS_\loc}\bE_\bQ\left[\left|X_T^1+X_T^2\SQ_T\right|\mid\cF_t\right]\\
	\nonumber
	&\leq  \esssup_{(\bQ,\SQ)\in\CPS_\loc}\bE_\bQ[C_1+C_2\SQ_T\mid\cF_t]\\
	\nonumber
	&=C_1+C_2 \esssup_{(\bQ,\SQ)\in\CPS_\loc}\bE_\bQ[\SQ_T\mid\cF_t]\\
	\nonumber
	&\leq C_1+C_2 (1+\lambda)S_t	\\
	\label{ali:linftybound}
	&\leq C_1+C_2 \frac{1+\lambda}{1-\lambda} \tilde S ^\bQQ_t.
\end{align}
We prove that the family
\begin{align*}
	\cG:=\left\{|F_{\sigma_n}^\bQ-F_\sigma^\bQ|:n\in\bN,\ (\bQ,\SQ)\in\CPS_\loc(0,T)\right\}
\end{align*}
is uniformly integrable with respect to $\bQQ$. First note, that
\begin{align}
	|F_{\sigma_n}^\bQ-F_\sigma^\bQ|\leq |F_{\sigma_n}^\bQ|+|F_\sigma^\bQ|\leq |F_{\sigma_n}|+|F_\sigma|.
\end{align}
It is easy to see that $F_\sigma\in L^1(\cF_\sigma,\bQQ)$ because of \eqref{ali:linftybound}. Equation \eqref{ali:linftybound} also implies 
\begin{align*}
	|F_{\sigma_n}|\leq C_1+C_2\frac{1+\lambda}{1-\lambda}\tilde S_{\sigma_n}^\bQQ,\quad\text{for all }n\in\bN.
\end{align*}
Therefore, it is enough to prove $\{\tilde S_{\sigma_n}^\bQQ:n\in\bN\}$ is uniformly integrable with respect to $\bQQ$. To this purpose, we first show that $\SQQ_{\sigma_n}\xrightarrow{L^1} \SQQ_\sigma$. Because $\SQQ$ is a non-negative, c\`adl\`ag local $\bQQ$-martingale, $\SQQ$ is also a supermartingale under $\bQQ$ and we get by Theorem 9 of \cite{protterstochastic} that
\begin{align}
\label{ali:l1conv}
	\lim_{n\to\infty}\bE_\bQQ[\SQQ_{\sigma_n}]=\bE_\bQQ[\SQQ_\sigma].
\end{align}
Thus, Scheff\`e's Lemma guarantees that $\SQQ_{\sigma_n}\xrightarrow{L^1} \SQQ_\sigma$. Since $\SQQ_{\sigma_n}\in L^1(\cF_T,\bQQ)$ for all $n\in \bN$ and $\SQQ_{\sigma_n}\xrightarrow{a.s.}\SQQ_\sigma$ and $\SQQ_{\sigma_n}\xrightarrow{L^1}\SQQ_\sigma$, Theorem 6.25 of \cite{klenke2013probability} implies that $\{\SQQ_{\sigma_n}:n\in\bN\}$ is uniformly integrable with respect to $\bQQ$ which yields that $\cG$ is uniformly integrable with respect to $\bQQ$.\\
By Assumption \ref{assumptioncadlag} there exists for all $\epsilon>0$ and for each $n\in\bar\bN$ a sequence 
$$(\bQ(m_k(n)),\SQ(m_k(n)))_{k\in\bN}\subset\CPS_\loc(0,T)$$ such that \eqref{ali:assumption3sequence} is satisfied. Fix $\epsilon>0$. By $(\bE_\bQQ[F_\sigma^{\bQ(m_k(\infty)}])_{k\in\bN}$ we denote the sequence converging to $\bE_\bQQ[F_\sigma]$. For $N\in\bN$ and $k\in\bN$ consider the set $A_N^k=A_N^{\epsilon/8,k}$ defined in \eqref{ali:assumption3setan}, i.e.,
\begin{align}
\label{ali:setan}
	A_N^k=\left\{\omega\in\Omega: |F_{\sigma_n}^{\bQ(m_k(n_0))}-F_{\sigma}^{\bQ(m_k(n_0))}|(\omega)<\frac{\epsilon}{8},\ \forall n\geq N,\ \forall n_0\in \bar\bN\right\}.
\end{align}
By Assumption \ref{assumptioncadlag} we get that $\bQQ(\bigcup_{N\in\bN}A_N^k)=1$ for all $k\in\bN$. As for fixed $k\in\bN$ we get $A_N^k\subset A_{N+1}^k$ we can conclude that $\bQQ(A_N^k)\uparrow 1$ as $N$ tends to infinity.\\
Fix $k\in\bN$ such that 
\begin{align}
\label{ali:uniformsequence}
	\left|\bE_\bQQ\left[F_{\sigma_n}^{\bQ(m_k(n))}\right]-\bE_\bQQ\left[F_{\sigma_n}\right]\right|<\frac{\epsilon}{8},\text{ for all }n\in\bar\bN.
\end{align}	
Since $\cG$ is uniformly integrable with respect to $\bQQ$, there exists $\delta=\delta(\epsilon)$ such that for all $\Lambda\in\cF_T$ satisfying $\bQQ(\Lambda)<\delta$, we get 
\begin{align}
\label{ali:uniformintegrabilitygeneral}
	\bE_\bQQ\left[\left|F_{\sigma_n}^{\bQ(m_k(n_0))}-F_\sigma^{\bQ(m_k(n_0))}\right|\mathds{1}_\Lambda \right]<\frac{\epsilon}{8},
\end{align}
for all $n,n_0\in\bar\bN$. Since $\bQQ(A_N^k) \uparrow 1$ as $N$ tends to infinity, there exists $N_0=N_0(\epsilon,k)\in\bN$ such that $\bQQ((A_N^k)^c)<\delta$ for all $N\geq N_0$. Fix $N\geq N_0$ and let $n\geq N$. Then we have
\begin{align}
\nonumber
	\bE_\bQQ\left[\left|F_{\sigma_n}^{\bQ(m_k(n_0))}-F_\sigma^{\bQ(m_k(n_0))}\right|\right]&=\bE_\bQQ\left[\left|F_{\sigma_n}^{\bQ(m_k(n_0))}-F_\sigma^{\bQ(m_k(n_0))}\right|\mathds{1}_{A_N^k}\right]\\
	\nonumber
	&+\bE_\bQQ\left[\left|F_{\sigma_n}^{\bQ(m_k(n_0))}-F_\sigma^{\bQ(m_k(n_0))}\right|\mathds{1}_{(A_N^k)^c} \right]\\
	\label{ali:indicatordecomposition}
	&<\frac{\epsilon}{4},
\end{align}
by the definition of the set $A_N^k$ in \eqref{ali:setan} and by \eqref{ali:uniformintegrabilitygeneral} because $\bQQ((A_N^k)^c)<\delta$. We consider three different cases.

	\begin{enumerate}
		\item [Case 1:] Assume 
		\begin{align}
		\label{case1}
			\bE_\bQQ\left[F_{\sigma_n}^{\bQ(m_k(n))}\right]\leq \bE_\bQQ\left[F_{\sigma_n}^{\bQ(m_k(\infty))}\right].
		\end{align}
	Then we have
		\begin{align}
		\nonumber
		& \left|\bE_\bQQ\left[F_{\sigma_n}\right]-\bE_\bQQ\left[F_{\sigma}\right]\right| \\	
		\nonumber
		\leq & \left|\bE_\bQQ\left[F_{\sigma_n}\right]- \bE_\bQQ\left[F_{\sigma_n}^{\bQ(m_k(\infty))}\right]\right|+\left|\bE_\bQQ\left[F_{\sigma_n}^{\bQ(m_k(\infty))}\right]-\bE_\bQQ\left[F_{\sigma}^{\bQ(m_k(\infty))}\right]\right|\\
		\nonumber
		+& \left|\bE_\bQQ\left[F_\sigma^{\bQ(m_k(\infty))}\right]-\bE_\bQQ\left[F_{\sigma}\right]\right|\\
		\nonumber
		< & \left|\bE_\bQQ\left[F_{\sigma_n}\right]- \bE_\bQQ\left[F_{\sigma_n}^{\bQ(m_k(n))}\right]\right|+\left|\bE_\bQQ\left[F_{\sigma_n}^{\bQ(m_k(\infty))}\right]-\bE_\bQQ\left[F_{\sigma}^{\bQ(m_k(\infty))}\right]\right|+\frac{\epsilon}{8}\\
		\label{ali:case1main}
		<& \frac{2\epsilon}{8}+\bE_\bQQ\left[\left|F_{\sigma_n}^{\bQ(m_k(\infty))}-F_{\sigma}^{\bQ(m_k(\infty))}\right|\mathds{1}_{A_N^k}\right]+\bE_\bQQ\left[\left|F_{\sigma_n}^{\bQ(m_k(\infty))}-F_{\sigma}^{\bQ(m_k(\infty))}\right|\mathds{1}_{(A_N^k)^c}\right]\\
		\nonumber
		< & \frac{4\epsilon}{8}<\epsilon
		\end{align}
		The second inequality holds due to \eqref{case1} and to the fact that $F_{\sigma_n}$ is the essential supremum over all consistent price systems, and because of \eqref{ali:uniformsequence}. Also \eqref{ali:case1main} holds due to  \eqref{ali:uniformsequence}. In the last step we applied \eqref{ali:indicatordecomposition}. 
		
		\item [Case 2:] Assume 
		\begin{align} 
		\label{case2}
		\bE_\bQQ\left[F_{\sigma}^{\bQ(m_k(n))}\right]\geq \bE_\bQQ\left[F_{\sigma}^{\bQ(m_k(\infty))}\right].
		\end{align}
	Then we have
		\begin{align*}
		& \left|\bE_\bQQ\left[F_{\sigma_n}\right]-\bE_\bQQ\left[F_{\sigma}\right]\right| \\	
		\leq & \left|\bE_\bQQ\left[F_{\sigma_n}\right]- \bE_\bQQ\left[F_{\sigma_n}^{\bQ(m_k(n))}\right]\right|+\left|\bE_\bQQ\left[F_{\sigma_n}^{\bQ(m_k(n))}\right]-\bE_\bQQ\left[F_{\sigma}^{\bQ(m_k(n))}\right]\right|\\
		+ & \left|\bE_\bQQ\left[F_\sigma^{\bQ(m_k(n))}\right]-\bE_\bQQ\left[F_{\sigma}\right]\right|\\
		<& \frac{\epsilon}{8}+\left|\bE_\bQQ\left[F_{\sigma_n}^{\bQ(m_k(n))}\right]-\bE_\bQQ\left[F_{\sigma}^{\bQ(m_k(n))}\right]\right|+\left|\bE_\bQQ\left[F_\sigma^{\bQ(m_k(\infty))}\right]-\bE_\bQQ\left[F_{\sigma}\right]\right|\\
		< & \frac{2\epsilon}{8}+\bE_\bQQ\left[\left|F_{\sigma_n}^{\bQ(m_k(n))}-F_{\sigma}^{\bQ(m_k(n))}\right|\mathds{1}_{A_N^k}\right]+\bE_\bQQ\left[\left|F_{\sigma_n}^{\bQ(m_k(n))}-F_{\sigma}^{\bQ(m_k(n))}\right|\mathds{1}_{(A_N^k)^c}\right]\\
		< & \frac{4\epsilon}{8}<\epsilon
		\end{align*}
		The steps in Case 2 are analogously to Case 1, replacing \eqref{case1} by \eqref{case2}.
		
		\item [Case 3:] Assume
		\begin{align}
		\label{case3}
		\bE_\bQQ\left[F_{\sigma_n}^{\bQ(m_k(n))}\right]> \bE_\bQQ\left[F_{\sigma_n}^{\bQ(m_k(\infty))}\right]\text{ and }
		\bE_\bQQ\left[F_{\sigma}^{\bQ(m_k(n))}\right]< \bE_\bQQ\left[F_{\sigma}^{\bQ(m_k(\infty))}\right].
	\end{align}
Then we have
		\begin{align*}
		& \left|\bE_\bQQ\left[F_{\sigma_n}\right]-\bE_\bQQ\left[F_{\sigma}\right]\right| \\	
		\leq & \left|\bE_\bQQ\left[F_{\sigma_n}\right]- \bE_\bQQ\left[F_{\sigma_n}^{\bQ(m_k(n))}\right]\right|+\left|\bE_\bQQ\left[F_{\sigma_n}^{\bQ(m_k(n))}\right]-\bE_\bQQ\left[F_{\sigma}^{\bQ(m_k(\infty))}\right]\right|\\
		+& \left|\bE_\bQQ\left[F_\sigma^{\bQ(m_k(\infty))}\right]-\bE_\bQQ\left[F_{\sigma}\right]\right|\\
		<& \frac{\epsilon}{8	}+\left|\bE_\bQQ\left[F_{\sigma_n}^{\bQ(m_k(n))}\right]-\bE_\bQQ\left[F_{\sigma}^{\bQ(m_k(\infty))}\right]\right|+\frac{\epsilon}{8}
		\end{align*}
		The second inequality holds due to \eqref{ali:uniformsequence}. For the remaining part we obtain
		\begin{align*}
			&\left|\bE_\bQQ\left[F_{\sigma_n}^{\bQ(m_k(n))}\right]-\bE_\bQQ\left[F_{\sigma}^{\bQ(m_k(\infty))}\right]\right|\\
			\leq & \left|\bE_\bQQ\left[F_{\sigma_n}^{\bQ(m_k(n))}\right]-\bE_\bQQ\left[F_\sigma^{\bQ(m_k(n))}\right]\right|+\left|\bE_\bQQ\left[F_\sigma^{\bQ(m_k(n))}\right]-\bE_\bQQ\left[F_\sigma^{\bQ(m_k(\infty)}\right]\right|\\
			< & \frac{2\epsilon}{8}+\left|\bE_\bQQ\left[F_\sigma^{\bQ(m_k(\infty))}\right]-\bE_\bQQ\left[F_\sigma^{\bQ(m_k(n))}\right]\right|
		\end{align*}
		by the triangle inequality and \eqref{ali:indicatordecomposition}. Then \eqref{case3}, and \eqref{ali:indicatordecomposition} imply
		\begin{align*}
			&\left|\bE_\bQQ\left[F_\sigma^{\bQ(m_k(\infty))}\right]-\bE_\bQQ\left[F_\sigma^{\bQ(m_k(n))}\right]\right|=\bE_\bQQ\left[F_\sigma^{\bQ(m_k(\infty))}\right]-\bE_\bQQ\left[F_\sigma^{\bQ(m_k(n))}\right]\\
			<& \bE_\bQQ\left[F_\sigma^{\bQ(m_k(\infty))}\right]-\bE_\bQQ\left[F_{\sigma_n}^{\bQ(m_k(n))}\right]+\frac{2\epsilon}{8}\\
			<&\bE_\bQ\left[F_\sigma^{\bQ(m_k(\infty))}\right]-\bE_\bQQ\left[F_{\sigma_n}^{\bQ(m_k(n))}\right]+\frac{2\epsilon}{8}<\frac{4\epsilon}{8}.
		\end{align*}
		Combining these estimations we obtain
		
		\begin{align*}
			 \left|\bE_\bQQ\left[F_{\sigma_n}\right]-\bE_\bQQ\left[F_{\sigma}\right]\right| <  \frac{4\epsilon}{8}+\left|\bE_\bQQ\left[F_\sigma^{\bQ(m_k(\infty))}\right]-\bE_\bQQ\left[F_\sigma^{\bQ(m_k(n))}\right]\right|< \epsilon.
		\end{align*}
	\end{enumerate}
	Summarizing, we have	
	\begin{align*}
		\left|\bE_\bQQ\left[F_{\sigma_n}\right]-\bE_\bQQ\left[F_{\sigma}\right]\right| <\epsilon.
	\end{align*}
	Since $\epsilon>0$ was arbitrary, we can conclude that
	\begin{align*}
		\bE_\bQQ\left[F_{\sigma_n}\right]\xrightarrow{n\to\infty} \bE_\bQQ\left[F_{\sigma}\right].
	\end{align*}	
	This implies that $\left(F_t\right)_{t\in[0,T]}$ admits a right-continuous modification with respect to $\bQQ$ and hence also with respect to $\bP$.
\end{proof}\noindent
It is easy to see that Assumption \ref{assumptioncadlag} and Theorem \ref{thm:cadlag} can be analogously formulated for the non-local case. 

\begin{assumption}
	\label{assumptioncadlagnonlocal}
	We assume the existence of $\bQ_0\in\cQ$ such that for any decreasing sequence of stopping times $0\leq(\sigma_n)_{n\in\bN}\leq T$ with $\sigma_n\downarrow \sigma$ as $n$ tends to infinity, there exists a sequence $$(\bQ(m_k(n)),\SQ(m_k(n))_{k\in\bN}\subset\CPS(0,T),\ n\in\bar\bN,$$ such that $(\bE_\bQQ[F_{\sigma_n}^{\bQ(m_k(n))}])_{k\in\bN}$ converges uniformly over all $n\in\bar\bN$ to $\bE_\bQQ[F_{\sigma_n}]$, i.e., for all $\epsilon>0$ there exists $K=K(\epsilon)\in\bN$ such that
\begin{align}
\label{ali:assumption3sequencenonlocal}
	\left|\bE_\bQQ\left[F_{\sigma_n}^{\bQ(m_k(n))}\right]-\bE_\bQQ\left[F_{\sigma_n}\right]\right|<\epsilon,\text{ for all }k\geq K,\text{ and for all }n\in\bar\bN,
\end{align}	
	and that for all $k\in\bN$, $\bQQ(\bigcup_{N\in\bN}A_N^{\epsilon,k})=1 $, where
\begin{align}
\label{ali:assumption3setannonlocal}
	A_N^{\epsilon,k}:=\left\{\omega\in\Omega: |F_{\sigma_n}^{\bQ(m_k(n_0))}-F_{\sigma}^{\bQ(m_k(n_0))}|(\omega)<\epsilon,\ \forall n\geq N,\ \forall n_0\in \bar\bN\right\}.
\end{align}
\end{assumption}\noindent

\begin{theorem}
\label{thm:cadlagnonlocal}
Let Assumption \ref{assumption2} and \ref{assumptioncadlagnonlocal} hold and $s\in[0,T]$. Let $X_T=(X_T^1,X_T^2)\in L^\infty(\bR^2;\cF_T,\bP)$ such that
\begin{align}
\label{ali:superreplicationdynamiclocaladmcadlagnonlocal}
X_T^1+(X_T^2)^+(1-\lambda)S_T-(X_T^2)^-(1+\lambda)S_T\geq -M_s^1-M_s^2S_T
\end{align}
for some $(M_s^1,M_s^2)\in L_+^1(\cF_s,\cQ)\times L_+^\infty(\cF_s,\cQ)$. Then $F$ defined in \eqref{ali:processf} admits a right-continuous modification with respect to $\bP$.
\end{theorem}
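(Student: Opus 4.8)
The plan is to transcribe the proof of Theorem~\ref{thm:cadlag} into the non-local setting, replacing each local ingredient by its non-local counterpart. Let $\bQQ\in\cQ$ be the measure furnished by Assumption~\ref{assumptioncadlagnonlocal}, and let $F$ be as in \eqref{ali:processf} but now built from $\CPS$, consistently with Lemma~\ref{lem:supremumexpectationnonlocal}, so that $F_t^\bQ=\bE_\bQ[X_T^1+X_T^2\SQ_T\mid\cF_t]$ and $F_t=\esssup_{(\bQ,\SQ)\in\CPS}F_t^\bQ$. Since every $\bQ\in\cQ$ is equivalent to $\bP$, it suffices to produce a right-continuous modification of $F$ with respect to $\bQQ$, and by Theorem~48 of \cite{dellacherieb} this reduces to checking
\[
\lim_{n\to\infty}\bE_{\bQQ}[F_{\sigma_n}]=\bE_{\bQQ}[F_\sigma]
\]
for every decreasing sequence $(\sigma_n)_{n\in\bN}$ of $[0,T]$-valued stopping times with $\sigma_n\downarrow\sigma$.

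First I would record the a~priori bound: since $X_T^1,X_T^2\in L^\infty$, pick $C_1,C_2$ with $|X_T^i|\le C_i$; because each $\SQ$ is a genuine $\bQ$-martingale with $(1-\lambda)S_t\le\SQ_t\le(1+\lambda)S_t$, this yields $|F_t|\le C_1+C_2(1+\lambda)S_t\le C_1+C_2\tfrac{1+\lambda}{1-\lambda}\SQQ_t$, exactly as in \eqref{ali:linftybound}. Next I would prove that $\cG:=\{\,|F_{\sigma_n}^\bQ-F_\sigma^\bQ|:n\in\bN,\ (\bQ,\SQ)\in\CPS\,\}$ is uniformly integrable under $\bQQ$. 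Here the non-local case is, if anything, cleaner than the local one treated in Theorem~\ref{thm:cadlag}: $\SQQ$ is a true $\bQQ$-martingale, hence $\SQQ_{\sigma_n}=\bE_{\bQQ}[\SQQ_T\mid\cF_{\sigma_n}]$, so $\{\SQQ_{\sigma_n}:n\in\bN\}$ is automatically uniformly integrable (it is a family of conditional expectations of the single integrable variable $\SQQ_T$), and since $|F_{\sigma_n}^\bQ-F_\sigma^\bQ|\le|F_{\sigma_n}|+|F_\sigma|$ the bound above transfers uniform integrability to $\cG$; in particular $F_{\sigma_n},F_\sigma\in L^1(\bQQ)$.

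With the uniform integrability in hand, the argument proceeds exactly as in Theorem~\ref{thm:cadlag}. Using Lemma~\ref{lem:supremumexpectationnonlocal} (whose proof rests on the directed-upwards property of $\{F_t^\bQ:(\bQ,\SQ)\in\CPS\}$, obtained through the pasting construction of Proposition~\ref{prop:pmartingale} together with Theorem~\ref{thm:timeindependentnonlocal}) I would, for each $n\in\bar\bN$, choose $(\bQ(m_k(n)),\SQ(m_k(n)))_{k\in\bN}\subset\CPS(0,T)$ with $\bE_{\bQQ}[F_{\sigma_n}^{\bQ(m_k(n))}]\uparrow\bE_{\bQQ}[F_{\sigma_n}]$, and Assumption~\ref{assumptioncadlagnonlocal} to make this convergence uniform in $n$ (estimate \eqref{ali:assumption3sequencenonlocal}) and to control the sets $A_N^{\epsilon,k}$ of \eqref{ali:assumption3setannonlocal}, for which $\bQQ(A_N^{\epsilon,k})\uparrow1$. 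Splitting into the same three cases as in Theorem~\ref{thm:cadlag} and combining \eqref{ali:assumption3sequencenonlocal}, the definition of $A_N^{\epsilon,k}$, and uniform integrability of $\cG$, one gets $|\bE_{\bQQ}[F_{\sigma_n}]-\bE_{\bQQ}[F_\sigma]|<\epsilon$ for all $n$ large enough; letting $\epsilon\downarrow0$ gives the claimed convergence and hence a right-continuous modification of $F$ with respect to $\bQQ$, and therefore with respect to $\bP$. I expect the only point needing attention to be the bookkeeping that every local ingredient of Theorem~\ref{thm:cadlag} has been made available above in non-local form, rather than any genuinely new difficulty; the single analytic step, the uniform-integrability estimate, is in fact easier here precisely because $\SQQ$ is a true martingale and not merely a local one.
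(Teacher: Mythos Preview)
Your proposal is correct and follows exactly the approach the paper takes: its proof of Theorem~\ref{thm:cadlagnonlocal} consists of the single line ``The proof of Theorem~\ref{thm:cadlag} carries over using Assumptions~\ref{assumption2} and~\ref{assumptioncadlagnonlocal}.'' Your observation that the uniform-integrability step is actually easier here (since $\SQQ$ is a true martingale, so $\{\SQQ_{\sigma_n}\}$ is a family of conditional expectations of a fixed integrable variable) is a nice simplification over the Scheff\'e-lemma argument used in the local case.
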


\begin{proof}
	The proof of Theorem \ref{thm:cadlag} carries over using Assumptions \ref{assumption2} and \ref{assumptioncadlagnonlocal}.
\end{proof}\noindent
Theorem \ref{thm:cadlag} (resp. Theorem \ref{thm:cadlagnonlocal}) guarantees that the process $F$, defined in \eqref{ali:processf}, is well-defined. It is important to note the difference between $F$ and the super-replication price in the frictionless setting, see \cite{quenez}, \cite{kramkovduality}. In a frictionless market model the wealth of a portfolio equals the liquidation value and the price to buy the portfolio.\\
Under the presence of transaction costs this does no longer hold. More precisely, the liquidation value of a portfolio and the buying price of a portfolio are in general not the same. The process $F$ defines the capital that is needed to start a self-financing, admissible strategy in order to super-replicate the contingent claim. This price is usually higher than the liquidation value.

\begin{example}
\label{ex:assumption23}
	Suppose Assumption \ref{assumption2} holds and let $(X_T^1,X_T^2)=(0,1)$. Then, for each $k\in\bN$ such that $\frac{1}{k}\leq \lambda$ there exists $(\bQ(k),\SQ(k))\in\CPS(0,T,\frac{1}{k})$. We get
	\begin{align}
		(1-\lambda)S_t\leq \SQ_t(k) \leq \frac{1+\lambda}{1+\frac{1}{k}}\SQ_t(k)\leq (1+\lambda)S_t.
	\end{align}
	In particular, $(\bQ(k),\mu_k\SQ(k))\in\CPS(0,T,\lambda)$ for $\mu_k:=\frac{1+\lambda}{1+\frac{1}{k}}$. Furthermore, we have by Proposition 3.11 of \cite{biagini2019asset} that
	\begin{align}
		F_t=(1+\lambda)S_t,\quad t\in[0,T].
	\end{align}
	By the martingale property of $\SQ(k)$ we obtain
	\begin{align}
	\nonumber
		&\left| (1+\lambda)S_t - \bE_{\bQ(k)}\left[\mu_k\SQ_T(k)\mid\cF_t\right]\right|=\left|(1+\lambda)S_t - \mu_k\SQ_t(k)\right|\\
		\leq &\left|(1+\lambda)S_t-\mu_k \left(1-\frac{1}{k}\right)S_t\right|= \left|(1+\lambda)S_t\left( 1- \frac{1-\frac{1}{k}}{1+\frac{1}{k}}\right)\right|=(1+\lambda)S_t\frac{2}{k+1}.
	\end{align}
	Then we get for any $\bQQ\in\cQ$, and $t\in[0,T]$ that
	\begin{align}
		\nonumber
		&\left|\bE_\bQQ[F_t]-\bE_\bQQ\left[\mu_kF_t^{\bQ(k)}\right]\right|\leq \bE_\bQQ\left[\left| (1+\lambda)S_t - \bE_{\bQ(k)}\left[\mu_k\SQ_T(k)\mid\cF_t\right]\right|\right]\\
		\nonumber
		\leq &\bE_\bQQ\left[(1+\lambda)S_t\frac{2}{k+1}\right]\leq \bE_\bQQ\left[\frac{1+\lambda}{1-\lambda}\frac{2}{k+1}\SQQ_t\right]= \frac{1+\lambda}{1-\lambda}\frac{2}{k+1}\SQQ_0\\
		\label{ali:exampleuniformsequence}
		\leq &\frac{(1+\lambda)^2}{1-\lambda}\frac{2}{k+1}S_0.
	\end{align}
	Therefore, we can easily see that for for every $\epsilon>0$ there exists $k\in\bN$ such that \eqref{ali:assumption3sequencenonlocal} of Assumption \ref{assumptioncadlagnonlocal} is fulfilled by the sequence $(\bQ(k),\mu_k\SQ(k))_{k\in \bN}\subset\CPS(0,T)$ which is independent of $t\in[0,T]$. Let $(\sigma_n)_{n\in\bN}$ be any decreasing sequence of stopping times. Note that in this case we get that $A_N^k$ does not depend on $n_0\in\bN$ anymore, i.e.,
	\begin{align}
		A_N^k =\left\{\omega\in\Omega: |F_{\sigma_n}^{\bQ(k)}-F_{\sigma}^{\bQ(k)}|(\omega)<\epsilon,\ \forall n\geq N\right\}.
	\end{align}
	By Definition \ref{def:cps}, all consistent price systems are c\`adl\`ag which yields to $\bQQ(\bigcup_{N\in\bN}A_N^k)=1$. We conclude that under Assumption \ref{assumption2}, Assumption \ref{assumptioncadlagnonlocal} is fulfilled for $X_T=(0,1)$. 
\end{example}

\bibliography{superreplication}
\bibliographystyle{plainnat}

\end{document}